\newtheorem{theorem}{Theorem}
\newtheorem{lemma}{Lemma}
\newtheorem{corollary}{Corollary}
\newtheorem{Alg}{Algorithm}
\newcommand{\script}[1]{{{\cal{#1} }}}
\newcommand{\expect}[1]{\mathbb{E}\left[#1\right]}
\begin{document}
\title{Power Aware Wireless File Downloading: A Lyapunov Indexing Approach to A Constrained Restless Bandit Problem}
\author{Xiaohan Wei and Michael J. Neely
\thanks{The authors are with the  Electrical Engineering department at the University of Southern California, Los Angeles, CA.}
\thanks{This material was presented in part at the 12th International Symposium on Modeling and
Optimization in Mobile, Ad Hoc, and Wireless  Networks (WiOpt), Hammamet, Tunisia, 2014 \cite{xiaohan-file-download-wiopt2014}.}
\thanks{This
material is supported in part  by one or more of:  the NSF grants CCF-0747525, the Network Science Collaborative Technology Alliance sponsored by the U.S. Army Research Laboratory W911NF-09-2-0053, The Okawa Foundation research award.}}

\markboth{Revised paper for IEEE Transactions on Networking}{}
\maketitle



\begin{abstract}
This paper treats power-aware throughput maximization in a multi-user file downloading system. Each user can receive a new file only after its previous file is finished.  The file state processes for each user act as coupled Markov chains that form a generalized restless bandit system. First, an optimal algorithm is derived for the case of one user. The algorithm maximizes throughput subject to an average power constraint. Next, the one-user algorithm is extended to a low complexity heuristic for the multi-user problem. The heuristic uses a simple online index policy.  In a special case with
no power-constraint, the multi-user heuristic is shown to be throughput optimal.   Simulations are used to demonstrate effectiveness of the heuristic in the general case. For simple cases where the optimal solution can be computed offline, the heuristic is shown to be near-optimal for a wide range of parameters.
\end{abstract}

\section{Introduction}

Consider a wireless access point, such as a base station or  femto node, that delivers files to
$N$ different wireless users.  The system operates in slotted time with time slots $t \in \{0, 1, 2, \ldots\}$.
Each user can download at most one file at a time.   File sizes are random and complete delivery of a file requires a random number of time slots. A new file request is made by each user at a random
time after it finishes its previous download.
Let $F_n(t) \in \{0,1\}$ represent the binary \emph{file state process} for user $n \in \{1, \ldots, N\}$.  The state $F_n(t)=1$ means that user $n$ is currently active downloading a file, while the state $F_n(t)=0$ means that user $n$ is currently idle.

Idle times are assumed to be independent and geometrically distributed with parameter $\lambda_n$ for each user $n$, so that the average idle time is $1/\lambda_n$.  Active times depend on the random file size and the transmission decisions that are made.  Every slot $t$, the access point observes which users are active and
decides to serve a subset of at most $M$ users, where $M$ is the maximum number of simultaneous transmissions allowed in the system ($M < N$ is assumed throughout).
The goal is to maximize a weighted sum of throughput subject to a total average power constraint.

The file state processes $F_n(t)$ are coupled controlled Markov chains that form a total state
$(F_1(t), \ldots, F_N(t))$ that can be viewed as a \emph{restless multi-armed
bandit system}.
Such problems are complex due to the inherent curse of dimensionality.

This paper first computes an online optimal algorithm for 1-user systems, i.e., the case $N=1$.  This simple case avoids the curse of dimensionality and provides valuable intuition.  The optimal policy here is nontrivial and uses the theory of Lyapunov optimization for renewal systems \cite{Neely2010}\cite{renewal-opt-tac}.
 The resulting algorithm makes a greedy transmission decision that affects success probability and power usage. The decision is
 based on a \emph{drift-plus-penalty} index.   Next, the algorithm is extended as a low complexity online heuristic for the $N$-user problem.   The heuristic has the following desirable properties:
\begin{itemize}
\item  Implementation of the $N$-user heuristic is as simple as comparing indices for $N$ different 1-user problems.

\item The $N$-user heuristic is analytically shown to meet the desired average power constraint.

\item  The $N$-user heuristic is shown in simulation to perform well over a wide range of parameters.  Specifically, it is very close to optimal for example cases where an offline optimal can be computed.

\item The $N$-user heuristic is shown to be optimal in a special case with no power constraint and with certain additional assumptions.  The optimality proof uses a theory of \emph{stochastic coupling} for queueing systems \cite{tass-server-allocation}.
\end{itemize}

Prior work on wireless optimization uses Lyapunov functions to maximize throughput in
cases where the users are assumed to have an infinite amount of data to
send \cite{tass-server-allocation}\cite{stolyar-greedy}\cite{now}\cite{atilla-fairness-ton}\cite{neely-fairness-ton}\cite{Srikant2011}\cite{Huang2013}, or
when data arrives according to a fixed rate process that does not depend on delays in the
network  (which necessitates \emph{dropping} data if the arrival rate vector is outside of the capacity
region) \cite{now}\cite{neely-fairness-ton}.  These models do not consider the interplay between arrivals at the transport layer and file delivery at the network layer.    For example, a web user in a coffee shop may want to evaluate the file she downloaded before initiating another download.  The current paper captures this interplay through the binary file state processes $F_n(t)$. This
creates a complex problem of
coupled Markov chains.    This problem is fundamental to
file downloading systems.
The modeling and analysis of these systems is a  significant contribution of the current paper.

To understand this issue, suppose the data arrival rate is fixed and does not adapt to the service received
over the network.   If this arrival rate exceeds network capacity by a factor of two, then at least half of all data must
be dropped.  This can result in an unusable data stream, possibly one that
contains every odd-numbered packet.  A more practical model assumes that full files must be downloaded and that new downloads are only initiated when previous ones are completed.
A general model in this direction would
allow each user to download up to $K$ files simultaneously.
This paper considers the case $K=1$, so that each user is either actively downloading a file, or is idle.\footnote{One way to allow a user $n$ to download up to $K$ files simultaneously is as follows:  Define $K$   \emph{virtual users}  with separate binary file state processes.  The transition probability from idle to active in each of these virtual users is $\lambda_n/K$. The conditional rate of total new arrivals for user $n$ (given that $m$ files are currently in progress) is then $\lambda_n(1-m/K)$ for $m \in \{0, 1, \ldots, M\}$.}
 The resulting
system for $N$ users has a nontrivial Markov structure with $2^N$ states.

Markov decision problems (MDPs) can be solved offline via linear programming \cite{puterman}.  This can be prohibitively complex for large dimensional problems.  Low complexity solutions for coupled MDPs are possible in special cases when the
coupling involves only time average constraints \cite{Neely2012}.  Finite horizon coupled MDPs are treated
via integer programming in \cite{weak1} and via a heuristic ``task decomposition'' method in \cite{weak2}.
The problem of the current paper does not fit the framework of \cite{Neely2012}-\cite{weak2} because it includes both time-average constraints (on average power expenditure) and instantaneous constraints which restrict the number of users that can be served on one slot. The latter service restriction is similar to a traditional restless multi-armed bandit (RMAB) system \cite{whittle}.

{\bf RMAB problems consider a population of $N$ parallel MDPs that continue evolving whether in operation or not (although under different rules). The goal is to choose the MDPs in operation during each time slot so as to maximize the expected reward subject to a constraint on the number of MDPs in operation.} The problem is in general complex (see P-SPACE hardness results in \cite{pspace}).  A standard low-complexity heuristic for such problems is the \emph{Whittle's index} technique \cite{whittle}.
{\bf However, the Whittle's index framework applies only when there are two options on each state (active and passive). Further, it does not consider the
additional time average cost constraints.  The \emph{Lyapunov indexing} algorithm developed in the current paper can
be viewed as an alternative indexing scheme that can always be implemented and that incorporates
additional time average constraints.}
It is likely that the techniques of the current paper can be extended to other constrained
RMAB problems.
Prior work in \cite{tass-server-allocation} develops a Lyapunov drift method for queue stability, and work in
 \cite{Neely2010} develops a drift-plus-penalty ratio method for optimization over renewal systems. The current work is the first to use these techniques as a low complexity heuristic for multidimensional Markov problems.

Work in \cite{tass-server-allocation} uses the theory of stochastic coupling to show that a \emph{longest connected queue} algorithm is delay optimal in a multi-dimensional queueing system with \emph{special symmetric assumptions}.   The problem in \cite{tass-server-allocation} is different from that of the current paper.  However,
a similar coupling approach is used in Section \ref{section:coupling} to show that, for a special case with no power constraint, the Lyapunov indexing
algorithm  is throughput optimal in certain \emph{asymmetric} cases.
 As a consequence, the proof shows the policy is also optimal for a different setting with $M$ servers, $N$ single-buffer queues, and arbitrary packet arrival rates $(\lambda_1, \ldots, \lambda_N)$.

\section{Single user scenario}

Consider a file downloading system that consists of only one user that repeatedly downloads files.
Let $F(t) \in \{0,1\}$ be the file state process of the user.  State ``1'' means there is a file in the system that has not completed its download,
and  ``0'' means no file is waiting.
The length of each file is independent and is either exponentially distributed or geometrically distributed (described in more detail below).  Let $\overline{B}$ denote the expected file size in bits. Time is slotted. At each slot in which there is an active file for downloading, the user makes a service decision that affects both the downloading success probability and the power expenditure. After a file is downloaded, the system goes idle (state $0$) and remains in the idle state for a random amount of time that is independent and geometrically distributed with parameter $\lambda>0$.

A transmission decision is made on each slot $t$ in which $F(t)=1$.  The decision affects the number of bits that are sent, the probability these bits are successfully received, and the power usage.
Let $\alpha(t)$ denote the decision variable at slot $t$ and let $\mathcal{A}$ represent an abstract action set with a finite number of elements.  The set $\script{A}$ can represent a collection of modulation and coding
options for each transmission. Assume also that $\mathcal{A}$ contains an idle action denoted as ``0.'' The
decision  $\alpha(t)$ determines the following two values:
\begin{itemize}
  \item The probability of successfully downloading a file $\phi(\alpha(t))$, where $\phi(\cdot)\in[0,1]$ with $\phi(0)=0$.
  \item The power expenditure $p(\alpha(t))$, where $p(\cdot)$ is a nonnegative function with $p(0)=0$.
\end{itemize}
The user chooses $\alpha(t) = 0$ whenever $F(t)=0$.
The user chooses $\alpha(t) \in \script{A}$ for each slot $t$ in which $F(t) = 1$, with the goal of maximizing throughput subject to a time average power constraint.

The problem can be described by a two state Markov decision process with binary state $F(t)$. Given $F(t)=1$, a file is currently in
the system. This file will finish its download at the end of the slot with probability $\phi(\alpha(t))$. Hence, the transition probabilities out of state $1$ are:
\begin{eqnarray}
Pr[F(t+1) = 0 | F(t)=1] &=& \phi(\alpha(t))  \label{eq:trans1} \\
Pr[F(t+1) = 1 | F(t) = 1] &=& 1-\phi(\alpha(t)) \label{eq:trans2}
\end{eqnarray}
Given $F(t)=0$, the system is idle and will transition to the active state in the next slot with probability $\lambda$,
so that:
\begin{eqnarray}
Pr[F(t+1) = 1 | F(t) = 0] &=& \lambda  \label{eq:trans3} \\
Pr[F(t+1) =0 | F(t) = 0] &=& 1- \lambda \label{eq:trans4}
\end{eqnarray}

Define the throughput, measured by bits per slot, as:
\begin{equation}
    \liminf_{T\rightarrow\infty}\frac{1}{T}\sum_{t=0}^{T-1}\overline{B}\phi(\alpha(t)) \nonumber
\end{equation}
The file downloading problem reduces to the following:
\begin{align}
    \mbox{Maximize:} &~\liminf_{T\rightarrow\infty}\frac{1}{T}\sum_{t=0}^{T-1}\overline{B}\phi(\alpha(t)) \label{eq:prob-1}\\
    \mbox{Subject to:} &~\limsup_{T\rightarrow\infty}\frac{1}{T}\sum_{t=0}^{T-1}p(\alpha(t))\leq \beta \label{eq:prob-2}\\
    & \alpha(t) \in \script{A} \mbox{ $\forall t \in \{0, 1,2, \ldots\}$ such that $F(t) =1$} \label{eq:prob-3} \\
    & \mbox{Transition probabilities satisfy \eqref{eq:trans1}-\eqref{eq:trans4}} \label{eq:prob-4}
\end{align}
where $\beta$ is a positive constant that determines the desired average power constraint.

\subsection{The memoryless file size assumption}

The above model assumes that file completion success on slot $t$ depends only on the
transmission decision $\alpha(t)$, independent of history.   This implicitly assumes that file
length distributions have a \emph{memoryless property} where the residual file length is independent of the amount already delivered. Further, it is assumed that if the controller selects a transmission rate  that is larger than the residual bits in the file, the remaining portion of the transmission is padded with \emph{fill bits}.  This ensures error events provide no information about the residual file length beyond the already known 0/1 binary file state.  Of course, error probability might be improved by removing padded bits. However,  this affects only the last transmission of a file and has negligible impact when expected file size is large in comparison to the amount that can be transmitted in one slot.  Note that padding
is not needed in the special case when all transmissions send one fixed length packet.

The memoryless property holds when each file $i$ has independent length $B_i$ that is  \emph{exponentially distributed} with mean length $\overline{B}$ bits, so that:
\[ Pr[B_i > x] = e^{-x/\overline{B}} \mbox{ for $x >0$} \]
For example, suppose the \emph{transmission rate} $r(t)$ (in units of bits/slot) and the \emph{transmission success probability} $q(t)$ are given by general functions of $\alpha(t)$:
\begin{eqnarray*}
 r(t) &=& \hat{r}(\alpha(t)) \\
 q(t) &=& \hat{q}(\alpha(t))
 \end{eqnarray*}
 Then the file completion probability $\phi(\alpha(t))$ is the probability that the \emph{residual} amount of bits in the file is less than or equal to $r(t)$, \emph{and} that the transmission of these residual bits is a success.  By the memoryless property of the exponential distribution, the residual file length is distributed the same as the original
 file length.  Thus:
 \begin{eqnarray}
\phi(\alpha(t))
  &=& \hat{q}(\alpha(t))Pr[B_i \leq \hat{r}(\alpha(t))] \nonumber \\
  &=& \hat{q}(\alpha(t)) \int_{0}^{\hat{r}(\alpha(t))} \frac{1}{\overline{B}} e^{-x/\overline{B}} dx \label{eq:approx1}
  \end{eqnarray}

Alternatively, history independence holds when each file $i$ consists of a random number $Z_i$ of fixed length packets, where $Z_i$ is geometrically distributed with mean $\overline{Z} = 1/\mu$.   Assume each transmission
sends exactly one packet, but different power levels affect the transmission success probability $q(t) = \hat{q}(\alpha(t))$.   Then:
\begin{equation} \label{eq:approx2}
\phi(\alpha(t)) = \mu\hat{q}(\alpha(t))
\end{equation}

The memoryless file length assumption allows the file state  to be modeled by a simple
binary-valued process $F(t) \in \{0, 1\}$. However, actual file sizes may not have an exponential or geometric distribution.
 One way to treat general distributions is to \emph{approximate} the file sizes as being memoryless by
using a $\phi(\alpha(t))$ function defined by either \eqref{eq:approx1} or \eqref{eq:approx2}, formed by matching the average file size $\overline{B}$ or average number of packets $\overline{Z}$. The \emph{decisions} $\alpha(t)$ are made according to the algorithm below, but the actual event outcomes that arise from these decisions are not memoryless.   A simulation comparison of this approximation is provided in Section \ref{section:sims}, where it is shown to be remarkably accurate (see Fig. \ref{fig:Stupendous6}).

 The algorithm in this section optimizes over the class of all algorithms that do not use residual file length information.
This maintains low complexity by ensuring a user has a binary-valued
Markov state $F(t) \in \{0,1\}$. While a system controller might know the
residual file length, incorporating this knowledge creates a Markov decision problem with an infinite number of states (one for each possible value of residual length) which significantly complicates the scenario.

\subsection{Lyapunov optimization}

This subsection develops an online algorithm for problem \eqref{eq:prob-1}-\eqref{eq:prob-4}. First, notice that file state ``$1$'' is recurrent under any decisions for $\alpha(t)$. Denote $t_k$ as the $k$-th time when the system returns to state ``1.''
Define the \emph{renewal frame} as the time period between $t_k$ and $t_{k+1}$.  Define the \emph{frame size}:
\[ T[k] = t_{k+1} - t_k \]
Notice that $T[k]=1$ for any frame $k$ in which the file does not complete its download.  If the file is completed on frame $k$, then $T[k] = 1 + G_k$, where $G_k$ is a geometric random variable with mean
$\expect{G_k} = 1/\lambda$.   Each frame $k$ involves only a single decision $\alpha(t_k)$ that is made at the beginning of the frame.  Thus, the total power used over the duration of frame $k$ is:
\begin{equation}\label{extra1}
\sum_{t=t_{k}}^{t_{k+1}-1}p(\alpha(t)) = p(\alpha(t_k))
\end{equation}
Using a technique similar to that
proposed in \cite{Neely2010}, we treat the time average constraint in \eqref{eq:prob-2} using a virtual queue $Q[k]$ that is updated every frame $k$ by:
\begin{equation}
    Q[k+1]=\max\left\{Q[k]
     + p(\alpha(t_k)) - \beta T[k],~0\right\} \label{eq:q-update}
\end{equation}
with initial condition $Q[0]=0$.
The algorithm is then parameterized by a constant $V\geq0$ which affects a performance tradeoff.  At the beginning of the $k$-th renewal frame, the user observes virtual queue $Q[k]$ and chooses $\alpha(t_k)$ to maximize the following drift-plus-penalty (DPP) ratio \cite{Neely2010}:
\begin{equation}\label{e4}
    \max_{\alpha(t_k)\in\mathcal{A}} ~~\frac{V\overline{B}\phi(\alpha(t_k))- Q[k]p(\alpha(t_k))}
        {\mathbb{E}[T[k]|\alpha(t_k)]}
\end{equation}
The numerator of the above ratio adds a ``queue drift term''  $-Q[k]p(\alpha(t_k))$ to the
``current reward term'' $V\overline{B}\phi(\alpha(t_k))$. The intuition
is that  it is desirable to have a large value of current reward, but it is also desirable  to have a large drift (since this tends to decrease queue size).   Creating a weighted sum of these two terms and dividing by the expected frame size
gives a simple index.  The next subsections show that, for the context of the current paper, this index leads to an algorithm that pushes throughput arbitrarily close to optimal (depending on the chosen $V$ parameter) with a
strong sample path guarantee on average power expenditure.

The denominator in \eqref{e4} can  easily be computed:
\begin{eqnarray*}
\mathbb{E}[T[k]|\alpha(t_k)] = 1+\frac{\phi(\alpha(t_k))}{\lambda} \nonumber
\end{eqnarray*}
Thus, \eqref{e4} is equivalent to
\begin{equation}\label{e5}
    \max_{\alpha(t_k)\in\mathcal{A}} ~~\frac{V\overline{B}\phi(\alpha(t_k))- Q[k]p(\alpha(t_k))}
        {1+\phi(\alpha(t_k))/\lambda}
\end{equation}
Since there are only a finite number of elements in $\mathcal{A}$, (\ref{e5}) is easily computed.  This gives the following algorithm for the single-user case:
\begin{algorithm}
\begin{Alg}~
\begin{itemize}
  \item At each time $t_{k}$, the user observes virtual queue $Q[k]$ and chooses $\alpha(t_k)$ as the solution to (\ref{e5}) (where ties are broken arbitrarily).
  \item The value $Q[k+1]$ is computed according to \eqref{eq:q-update} at the end of the $k$-th frame.
\end{itemize}
\end{Alg}
\end{algorithm}

\subsection{Average power constraints via queue bounds}

\begin{lemma} \label{lem:1}
If there is a constant $C\geq 0$ such that $Q[k] \leq C$ for all $k \in \{0, 1, 2, \ldots\}$, then:
\[ \limsup_{T\rightarrow\infty}\frac{1}{T}\sum_{t=0}^{T-1}p(\alpha(t))\leq \beta \]
\end{lemma}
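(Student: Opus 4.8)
The plan is to exploit the fact that a bounded virtual queue cannot accumulate ``arrivals'' faster than it drains ``service,'' so the long-run frame average of $p(\alpha(t_k)) - \beta T[k]$ is forced to be nonpositive. Concretely, I would start from the update \eqref{eq:q-update} and discard the $\max\{\cdot,0\}$ to obtain the one-sided bound
\[
Q[k+1] \ge Q[k] + p(\alpha(t_k)) - \beta T[k],
\]
which rearranges to $p(\alpha(t_k)) - \beta T[k] \le Q[k+1]-Q[k]$. Summing this telescoping inequality over $k \in \{0,1,\ldots,K-1\}$ and using $Q[0]=0$ together with the hypothesis $Q[K]\le C$ gives
\[
\sum_{k=0}^{K-1} p(\alpha(t_k)) \;\le\; \beta \sum_{k=0}^{K-1} T[k] + C \;=\; \beta\,(t_K - t_0) + C,
\]
where the equality uses $\sum_{k=0}^{K-1}T[k] = t_K - t_0$.

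The second step is to translate this frame-indexed bound into the slot-indexed time average appearing in the lemma. The crucial structural observation, from \eqref{extra1}, is that power is spent only on the first slot $t_k$ of each frame, so the slot sum over any horizon equals the sum of the per-frame terms $p(\alpha(t_k))$ over frames whose decision slot has already occurred. Fixing $T$ and letting $K$ be the largest index with $t_K \le T$, I would bound
\[
\sum_{t=0}^{T-1} p(\alpha(t)) \;\le\; \sum_{k=0}^{K-1} p(\alpha(t_k)) + p_{\max},
\]
where $p_{\max} = \max_{\alpha\in\mathcal{A}} p(\alpha) < \infty$, finite because $\mathcal{A}$ has finitely many elements, accounts for the at most one extra decision slot $t_K$ that may lie in $\{0,\ldots,T-1\}$ from the partial final frame. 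Combining with the first step and using $t_K - t_0 \le t_K \le T$ yields
\[
\sum_{t=0}^{T-1} p(\alpha(t)) \;\le\; \beta\,(t_K - t_0) + C + p_{\max} \;\le\; \beta\, T + C + p_{\max}.
\]

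Dividing by $T$ gives $\frac{1}{T}\sum_{t=0}^{T-1} p(\alpha(t)) \le \beta + (C+p_{\max})/T$, and taking $\limsup_{T\to\infty}$ sends the residual $(C+p_{\max})/T$ to zero, which is exactly the claimed bound. I expect the main obstacle to be not the telescoping estimate, which is routine, but the careful alignment between the arbitrary slot horizon $T$ and the random frame boundaries $t_k$: one must argue that the final, possibly incomplete, frame contributes only a bounded amount of power, and that replacing $t_K$ by $T$ goes in the correct direction. Both are handled cleanly by the facts that power is incurred only at frame starts and that $p$ is bounded over the finite action set, so no probabilistic control on the frame lengths $T[k]$ is needed.
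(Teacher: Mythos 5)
Your proof is correct and follows essentially the same route as the paper: telescope the one-sided queue inequality $Q[k+1]\ge Q[k]+p(\alpha(t_k))-\beta T[k]$ over frames, use the bound $Q[K]\le C$, and convert the frame-indexed sum to the slot-indexed time average using the fact that power is spent only at frame starts. Your explicit $+p_{\max}$ accounting for the partial final frame is in fact slightly more careful than the paper's corresponding step, but the argument is the same.
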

\begin{proof}
From \eqref{eq:q-update}, we know that for each frame $k$:
\begin{equation}
    Q[k+1] \geq Q[k]
     +p(\alpha(t_{k}))-T[k]\beta \nonumber
\end{equation}
Rearranging terms and using $T[k] = t_{k+1} -t_k$ gives:
\[ p(\alpha(t_k)) \leq (t_{k+1} - t_k)\beta + Q[k+1]-Q[k] \]
Fix $K>0$. Summing over $k\in\{0,1,\cdots,K-1\}$  gives:
\begin{eqnarray*}
    \sum_{k=0}^{K-1}p(\alpha(t_{k})) &\leq& (t_{K}-t_0) \beta + Q[K] - Q[0] \\
    &\leq&  t_K\beta + C
\end{eqnarray*}
The sum power over the first $K$ frames is the same as the sum up to time $t_{K}-1$, and so:
\[ \sum_{t=0}^{t_K-1} p(\alpha(t)) \leq t_K \beta + C \]
Dividing by $t_K$ gives:
\[ \frac{1}{t_K}\sum_{t=0}^{t_K-1} p(\alpha(t)) \leq \beta + C/t_K. \]
Taking $K\rightarrow\infty$, then,
\begin{equation} \label{eq:sub-here}
\limsup_{K\rightarrow\infty}\frac{1}{t_K}\sum_{t=0}^{t_K-1} p(\alpha(t)) \leq \beta
\end{equation}
Now for each positive integer $T$, let $K(T)$ be the integer such that $t_{K(T)} \leq T < t_{K(T)+1}$.
Since power
is only used at the first slot of a  frame, one has:
\[ \frac{1}{T}\sum_{t=0}^{T-1} p(\alpha(t))\leq \frac{1}{t_{K(T)}}\sum_{t=0}^{t_{K(T)}-1}p(\alpha(t)) \]
Taking a $\limsup$ as $T\rightarrow\infty$ and using \eqref{eq:sub-here} yields the result.
\end{proof}

The next lemma shows that the queue process under our proposed algorithm is deterministically bounded.
Define:
\begin{eqnarray*}
p^{min} &=& \min_{\alpha\in\mathcal{A}\setminus\{0\}}p(\alpha)\\
p^{max}&=&\max_{\alpha\in\mathcal{A}\setminus\{0\}}p(\alpha)
\end{eqnarray*}
Assume that $p^{min}>0$.

\begin{lemma} \label{lem:2} If $Q[0]=0$, then under our algorithm we have for all $k>0$:
\[ Q[k]\leq \max\left\{\frac{V\overline{B}}{p^{min}}+p^{max}-\beta,0\right\}\]
\end{lemma}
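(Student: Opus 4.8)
The plan is to prove the bound by induction on $k$, exploiting the structure of the index rule \eqref{e5} to show that the queue can never grow once it has become large. The pivotal observation is that the idle action $\alpha=0$ always yields DPP-ratio value $0$ in \eqref{e5}, since $\phi(0)=p(0)=0$. Consequently the action actually chosen by the algorithm must achieve a ratio of at least $0$, which is the lever I would use to force idling at high queue values.

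First I would determine when a non-idle action can possibly be selected. For any $\alpha\in\mathcal{A}\setminus\{0\}$ the denominator $1+\phi(\alpha)/\lambda$ is strictly positive, so the sign of the ratio in \eqref{e5} equals the sign of its numerator $V\overline{B}\phi(\alpha)-Q[k]p(\alpha)$. Using $\phi(\alpha)\leq 1$ and $p(\alpha)\geq p^{min}>0$, this numerator is at most $V\overline{B}-Q[k]p^{min}$. Hence whenever $Q[k]>V\overline{B}/p^{min}$, every non-idle action has a strictly negative numerator, and therefore a strictly negative ratio, so the idle action (ratio exactly $0$) is the unique maximizer and the algorithm idles.

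Next I would set $C^\ast=\max\{V\overline{B}/p^{min}+p^{max}-\beta,\,0\}$ and carry out the induction with hypothesis $Q[k]\leq C^\ast$. The base case $Q[0]=0\leq C^\ast$ is immediate. For the inductive step I split into two cases. If $Q[k]\leq V\overline{B}/p^{min}$, then from the update \eqref{eq:q-update} with $T[k]\geq 1$ and $p(\alpha(t_k))\leq p^{max}$, the queue grows by at most $p^{max}-\beta$ (the outer $\max\{\cdot,0\}$ only helps), so $Q[k+1]\leq V\overline{B}/p^{min}+p^{max}-\beta\leq C^\ast$. If instead $Q[k]>V\overline{B}/p^{min}$, the preceding paragraph shows the algorithm idles, whence $p(\alpha(t_k))=0$ and $T[k]=1$, giving $Q[k+1]=\max\{Q[k]-\beta,0\}\leq Q[k]\leq C^\ast$. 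This closes the induction.

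The step I expect to require the most care is the transition region: a value of $Q[k]$ lying just below the threshold $V\overline{B}/p^{min}$ can still trigger a non-idle action and jump upward, which is precisely why the additive slack $p^{max}-\beta$ must appear in the bound, and why the threshold alone does not suffice. A secondary subtlety is confirming that the idle action is genuinely optimal rather than merely tied above the threshold; this is secured by the \emph{strict} inequality $Q[k]p(\alpha)>V\overline{B}\phi(\alpha)$ obtained from $p(\alpha)\geq p^{min}$ and $\phi(\alpha)\leq 1$, so the argument is independent of the arbitrary tie-breaking rule.
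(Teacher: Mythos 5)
Your proposal is correct and follows essentially the same argument as the paper: an induction on $k$ whose key step is that once $Q[k]>V\overline{B}/p^{min}$, every non-idle action makes the numerator of \eqref{e5} strictly negative while the idle action achieves ratio $0$, forcing the algorithm to idle so the queue cannot increase, while below that threshold the one-step increase is at most $p^{max}-\beta$. The only cosmetic difference is that the paper splits off the case $p^{max}\leq\beta$ up front, whereas you absorb it into the outer $\max\{\cdot,0\}$; both are valid.
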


\begin{proof}
First, consider the case when $p^{max} \leq \beta$. From \eqref{eq:q-update} and the fact that $T[k] \geq 1$ for all $k$,
 it is clear  the queue can never increase, and so $Q[k] \leq Q[0]=0$ for all $k>0$.

Next, consider the case when $p^{max} > \beta$. We prove the assertion by induction on $k$.
The result trivially holds for $k=0$.  Suppose it holds at $k=l$ for $l>0$, so that:
\[ Q[l]\leq \frac{V\overline{B}}{p^{min}}+p^{max}-\beta \]
We are going to prove that the same holds for $k=l+1$.  There are two cases:
\begin{enumerate}
  \item $Q[l]\leq\frac{V\overline{B}}{p^{min}}$. In this case we have by \eqref{eq:q-update}:
  \begin{eqnarray*}
   Q[l+1] &\leq& Q[l] + p^{max} - \beta \\
   &\leq& \frac{V\overline{B}}{p^{min}} + p^{max} - \beta
   \end{eqnarray*}

    \item $\frac{V\overline{B}}{p^{min}}<Q[l]\leq\frac{V\overline{B}}{p^{min}}+p^{max}-\beta$.    In this case,
    if $p(\alpha(t_l))=0$ then the queue cannot increase, so:
    \[ Q[l+1] \leq Q[l] \leq   \frac{V\overline{B}}{p^{min}} + p^{max} - \beta  \]
    On the other hand, if $p(\alpha(t_l))>0$ then $p(\alpha(t_l)) \geq p^{min}$ and so the numerator in
    \eqref{e5} satisfies:
    \begin{eqnarray*}
    V\overline{B}\phi(\alpha(t_l)) - Q[l]p(\alpha(t_l))
    &\leq& V\overline{B} - Q[l]p^{min} \\
    &<& 0
    \end{eqnarray*}
    and so the maximizing ratio in \eqref{e5} is negative.  However, the maximizing ratio in \eqref{e5} \emph{cannot} be negative
    because the alternative  choice $\alpha(t_l)=0$ increases the ratio to 0.  This contradiction implies that
    we cannot have $p(\alpha(t_l))>0$.
\end{enumerate}
\end{proof}

The above is a \emph{sample path result} that only assumes parameters satisfy $\lambda >0$, $\overline{B}>0$, and $0 \leq \phi(\cdot) \leq 1$.  Thus, the algorithm meets the average power constraint even if it uses incorrect values for these parameters. The next subsection provides a throughput optimality result when these parameters match the true system values.

\subsection{Optimality over randomized algorithms}

Consider the following class of \emph{i.i.d. randomized algorithms}:  Let $\theta(\alpha)$ be non-negative numbers defined for each $\alpha \in \script{A}$, and suppose they satisfy $\sum_{\alpha \in \script{A}} \theta(\alpha) = 1$.  Let $\alpha^*(t)$ represent a policy that, every slot $t$ for which $F(t)=1$, chooses $\alpha^*(t) \in \script{A}$ by independently selecting strategy $\alpha$ with probability $\theta(\alpha)$.    Then $(p(\alpha^*(t_k)), \phi(\alpha^*(t_k)))$ are independent and identically distributed (i.i.d.) over frames $k$. Under this algorithm, it follows by the law of large numbers that the throughput and power expenditure satisfy (with probability 1):
\begin{eqnarray*}
\lim_{t\rightarrow\infty} \frac{1}{T}\sum_{t=0}^{T-1} \overline{B}\phi(\alpha^*(t)) &=& \frac{\overline{B}\expect{\phi(\alpha^*(t_k))}}{1 + \expect{\phi(\alpha^*(t_k))}/\lambda} \\
\lim_{t\rightarrow\infty} \frac{1}{T}\sum_{t=0}^{T-1} p(\alpha^*(t)) &=& \frac{\expect{p(\alpha^*(t_k))}}{1 + \expect{\phi(\alpha^*(t_k))}/\lambda}
\end{eqnarray*}
It can be shown that optimality of problem \eqref{eq:prob-1}-\eqref{eq:prob-4} can be achieved over this class \cite{renewal-opt-tac}. Thus, there exists an i.i.d. randomized algorithm $\alpha^*(t)$ that satisfies:
\begin{eqnarray}
\frac{\overline{B}\expect{\phi(\alpha^*(t_k))}}{1 + \expect{\phi(\alpha^*(t_k))}/\lambda} &=& \mu^* \label{eq:iid1} \\
 \frac{\expect{p(\alpha^*(t_k))}}{1 + \expect{\phi(\alpha^*(t_k))}/\lambda} &\leq& \beta \label{eq:iid2}
\end{eqnarray}
where $\mu^*$ is the optimal throughput for the problem \eqref{eq:prob-1}-\eqref{eq:prob-4}.

\subsection{Key feature of the drift-plus-penalty ratio}

Define $\script{H}[k]$ as the \emph{system history} up to frame $k$, which includes the actions taken $\alpha[0],\cdots,\alpha[k-1]$ frame lengths $T[0],\cdots,T[k-1]$, the busy period in each frame, the idle period in each frame, and the queue value $Q[k]$ (since
this is determined by the random events before frame $k$).
Consider the algorithm
that, on frame $k$, observes $Q[k]$ and
chooses $\alpha(t_k)$ according to \eqref{e5}.  The following key feature of this algorithm can be
shown (see \cite{Neely2010} for related results):
\begin{align*}
&\frac{\expect{-V\overline{B}\phi(\alpha(t_k)) + Q[k]p(\alpha(t_k))|\script{H}[k]}}{\expect{1 + \phi(\alpha(t_k))/\lambda|\script{H}[k]}}
\nonumber \\
&\leq
\frac{\expect{-V\overline{B}\phi(\alpha^*(t_k)) + Q[k]p(\alpha^*(t_k))|\script{H}[k]}}{\expect{1 + \phi(\alpha^*(t_k))/\lambda|\script{H}[k]}}
\end{align*}
where $\alpha^*(t_k)$ is any (possibly randomized) alternative decision that is based only on $\script{H}[k]$.
 This is an intuitive property:  By design, the algorithm in \eqref{e5} observes $\script{H}[k]$ and then chooses a particular action $\alpha(t_k)$ to minimize the ratio over all deterministic actions.  Thus, as can be shown, it also minimizes the ratio over all potentially randomized actions.
Using the (randomized) i.i.d. decision $\alpha^*(t_k)$ from \eqref{eq:iid1}-\eqref{eq:iid2} in the above and noting that this alternative decision
is independent of $\script{H}[k]$ gives:
\begin{align}
&\frac{\expect{-V\overline{B}\phi(\alpha(t_k)) + Q[k]p(\alpha(t_k))|\script{H}[k]}}{\expect{1 + \phi(\alpha(t_k))/\lambda|\script{H}[k]}}
\leq -V\mu^* + Q[k]\beta \label{eq:key-feature}
\end{align}

\subsection{Performance theorem}

\begin{theorem}
The proposed algorithm achieves the constraint $\limsup_{T\rightarrow\infty}\frac{1}{T}\sum_{t=0}^{T-1}p(\alpha(t))\leq \beta$ and yields throughput satisfying (with probability 1):
\begin{equation}\label{e6}
    \liminf_{T\rightarrow\infty}\frac{1}{T}\sum_{t=0}^{T-1}\overline{B}\phi(\alpha(t))\geq \mu^{*}-\frac{C_0}{V}
\end{equation}
where $C_0$ is a constant.\footnote{The constant $C_0$ is independent of $V$ and is given in the proof.}
\end{theorem}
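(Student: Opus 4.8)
The plan is to dispatch the power constraint immediately and spend the real effort on the throughput bound. For the constraint, I would just chain the two preceding lemmas: Lemma \ref{lem:2} gives a deterministic bound $Q[k] \le C$ for all $k$ with $C = \max\{V\overline{B}/p^{min} + p^{max} - \beta,\, 0\}$, and feeding this $C$ into Lemma \ref{lem:1} yields $\limsup_{T\to\infty}\frac1T\sum_{t=0}^{T-1}p(\alpha(t)) \le \beta$ on every sample path. Nothing further is needed there.

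For the throughput bound I would run a drift-plus-penalty argument on the Lyapunov function $\frac12 Q[k]^2$. Squaring \eqref{eq:q-update} and using $\max\{x,0\}^2 \le x^2$ gives the per-frame drift bound $\frac12\expect{Q[k+1]^2 - Q[k]^2 \mid \script H[k]} \le B_0 + Q[k]\expect{p(\alpha(t_k)) - \beta T[k] \mid \script H[k]}$, where $B_0 := \frac12\sup_k \expect{(p(\alpha(t_k)) - \beta T[k])^2 \mid \script H[k]}$ is finite and \emph{independent of $V$} because $p(\cdot)\in[0,p^{max}]$ and $T[k]$ has a second moment controlled by $1/\lambda$ and $1/\lambda^2$. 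Subtracting the reward term $V\overline{B}\expect{\phi(\alpha(t_k))\mid\script H[k]}$ and substituting the key feature \eqref{eq:key-feature} cancels the $Q[k]\beta\expect{T[k]\mid\script H[k]}$ terms and leaves $\expect{\frac12(Q[k+1]^2 - Q[k]^2) - V\overline{B}\phi(\alpha(t_k)) \mid \script H[k]} \le B_0 - V\mu^*\expect{T[k]\mid\script H[k]}$. Taking total expectations, summing over $k=0,\dots,K-1$, telescoping, and discarding $\expect{Q[K]^2}\ge 0$ with $Q[0]=0$ yields $V\overline{B}\sum_{k=0}^{K-1}\expect{\phi(\alpha(t_k))} \ge V\mu^*\sum_{k=0}^{K-1}\expect{T[k]} - KB_0$. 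Dividing by $V\sum_k\expect{T[k]}$ and using $\expect{T[k]}\ge 1$ identifies the constant as $C_0 = B_0$ and gives the expected-ratio bound $\overline{B}\sum_k\expect{\phi(\alpha(t_k))}/\sum_k\expect{T[k]} \ge \mu^* - C_0/V$.

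To upgrade this to the almost-sure statement \eqref{e6}, I would first note that reward and power accrue only on the opening slot of each frame, so for $T=t_K$ the time average equals the frame ratio $\sum_{k=0}^{K-1}\overline{B}\phi(\alpha(t_k)) / \sum_{k=0}^{K-1}T[k]$, and a sandwiching argument over $t_{K(T)}\le T< t_{K(T)+1}$ (exactly as at the end of the proof of Lemma \ref{lem:1}) handles general $T$. The numerator increments $\phi(\alpha(t_k)) - \expect{\phi(\alpha(t_k))\mid\script H[k]}$ and denominator increments $T[k] - \expect{T[k]\mid\script H[k]}$ are martingale differences with uniformly bounded conditional variance (using $\phi\in[0,1]$ and the bounded second moment of $T[k]$), so a strong law of large numbers for martingales shows the empirical frame averages track their conditional-mean counterparts almost surely. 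The boundedness of $Q[k]$ from Lemma \ref{lem:2} likewise forces $\frac1K\sum_k\expect{\frac12(Q[k+1]^2-Q[k]^2)\mid\script H[k]}\to 0$ a.s., and combining these with the per-frame drift inequality gives $\liminf_K \sum_k\overline{B}\phi(\alpha(t_k))/\sum_k T[k] \ge \mu^* - C_0/V$ with probability $1$.

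The algebraic drift manipulation of the second paragraph is routine. The main obstacle is the almost-sure conversion of the third paragraph: since each frame's decision depends on $Q[k]$, the frames are adaptively controlled rather than i.i.d., so classical renewal-reward does not apply directly. One must justify the martingale strong law for unbounded (geometric) frame lengths and argue that the resulting conditional-mean ratio inherits the expected-ratio bound. This is precisely where the deterministic boundedness of Lemma \ref{lem:2} and the renewal-optimization machinery of \cite{Neely2010,renewal-opt-tac} do the heavy lifting.
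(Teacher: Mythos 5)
Your proposal is correct and follows essentially the same route as the paper: the constraint is dispatched by chaining Lemmas \ref{lem:2} and \ref{lem:1}, and the throughput bound comes from the quadratic Lyapunov drift-plus-penalty inequality combined with the key feature \eqref{eq:key-feature}, identification of $C_0$ as the conditional second-moment bound, and the frame-to-slot conversion using $T[k]\ge 1$. The only cosmetic difference is that the paper skips your intermediate expected-ratio step and invokes Proposition 2 of \cite{JAM2012} to get the almost-sure frame-average bound directly, which is exactly the martingale strong-law argument (with the square-summability condition $\sum_k \expect{\Delta[k]^2}/k^2<\infty$ supplied by the deterministic queue bound) that you sketch in your third paragraph.
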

\begin{proof}
First, for any fixed $V$, Lemma \ref{lem:2} implies that the queue is deterministically bounded.  Thus, according to Lemma \ref{lem:1}, the proposed algorithm achieves the constraint $\limsup_{T\rightarrow\infty}\frac{1}{T}\sum_{t=0}^{T-1}p(\alpha(t))\leq \beta$. The rest is devoted to proving the throughput guarantee \eqref{e6}.

Define:
\begin{equation}
    L(Q[k])=\frac{1}{2}Q[k]^2. \nonumber
\end{equation}
We call this a \emph{Lyapunov function}. Define a frame-based Lyapunov Drift as:
\[ \Delta[k] = L(Q[k+1]) - L(Q[k]) \]
According to \eqref{eq:q-update} we get
\begin{equation}
    Q[k+1]^2\leq \left(Q[k]+p(\alpha(t_{k}))-T[k]\beta\right)^2. \nonumber
\end{equation}
Thus:
\begin{eqnarray*}
\Delta[k]  \leq \frac{(p(\alpha(t_k)) - T[k]\beta)^2}{2} + Q[k](p(\alpha(t_k)) - T[k]\beta)
\end{eqnarray*}
Taking a conditional expectation of the above given $\script{H}[k]$ and recalling that $\script{H}[k]$ includes the information
$Q[k]$ gives:
\begin{equation} \label{eq:drift}
 \expect{\Delta[k] |\script{H}[k]} \leq C_0 + Q[k]\expect{p(\alpha(t_k)) - \beta T[k]| \script{H}[k]}
 \end{equation}
where $C_0$ is a constant that satisfies the following for all possible histories $\script{H}[k]$:
\[ \expect{\frac{(p(\alpha(t_k)) - T[k]\beta)^2}{2} \left |\right. \script{H}[k]} \leq C_0 \]
Such a constant $C_0$ exists because the power $p(\alpha(t_k))$ is deterministically bounded, and the frame sizes $T[k]$ are bounded in second moment regardless of history.

Adding the ``penalty'' $-\expect{V\overline{B}\phi(\alpha(t_k))|\script{H}[k]}$ to both sides of \eqref{eq:drift} gives:
\begin{align*}
&\expect{\Delta[k] - V\overline{B}\phi(\alpha(t_k))|\script{H}[k]}  \\
&\leq C_0 +  \expect{-V\overline{B}\phi(\alpha(t_k)) + Q[k](p(\alpha(t_k)) - \beta T[k])| \script{H}[k]} \\
&= C_0 - Q[k]\beta\expect{T[k]|\script{H}[k]}  \\
& +  \frac{\expect{T[k]|\script{H}[k]}\expect{-V\overline{B}\phi(\alpha(t_k)) + Q[k]p(\alpha(t_k))|\script{H}[k]}}{\expect{T[k]|\script{H}[k]}}
\end{align*}
Expanding $T[k]$ in the denominator of the last term  gives:
\begin{align*}
&\expect{\Delta[k] - V\overline{B}\phi(\alpha(t_k))|\script{H}[k]}  \nonumber\\
&\leq C_0 - Q[k]\beta\expect{T[k]|\script{H}[k]}  + \expect{T[k]|\script{H}[k]} \times \nonumber \\
&\frac{\expect{-V\overline{B}\phi(\alpha(t_k)) + Q[k]p(\alpha(t_k))|\script{H}[k]}}{\expect{1 + \phi(\alpha(t_k))/\lambda|\script{H}[k]}}
\end{align*}
Substituting \eqref{eq:key-feature} into the above expression gives:
\begin{eqnarray}
&&\hspace{-.4in}\expect{\Delta[k] - V\overline{B}\phi(\alpha(t_k))|\script{H}[k]} \nonumber  \\
&\leq&  C_0 - Q[k]\beta\expect{T[k]|\script{H}[k]} \nonumber  \\
&& + \expect{T[k]|\script{H}[k]} (-V\mu^* + \beta Q[k])\nonumber \\
&=& C_0 -V\mu^* \expect{T[k] | \script{H}[k]}  \label{eq:dude}
\end{eqnarray}
Rearranging gives:
\begin{equation} \label{eq:dpp}
\expect{\Delta[k] + V(\mu^*T[k] - \overline{B}\phi(\alpha(t_k)))|\script{H}[k]} \leq C_0 \
\end{equation}

The above is a drift-plus-penalty expression.
Because we already know the queue $Q[k]$ is deterministically bounded, it follows that:
\[ \sum_{k=1}^{\infty} \frac{\expect{\Delta[k]^2}}{k^2} < \infty \]
This, together with \eqref{eq:dpp}, implies by the drift-plus-penalty result in
Proposition 2 of \cite{JAM2012}  that (with probability 1):
\[  \limsup_{K\rightarrow\infty} \frac{1}{K}\sum_{k=0}^{K-1} \left[\mu^* T[k] - \overline{B}\phi(\alpha(t_k))\right] \leq \frac{C_0}{V} \]
Thus, for any $\epsilon>0$ one has for all sufficiently large $K$:
\[ \frac{1}{K}\sum_{k=0}^{K-1}[\mu^* T[k] - \overline{B} \phi(\alpha(t_k))] \leq \frac{C_0}{V} + \epsilon \]
Rearranging implies that for all sufficiently large $K$:
\begin{eqnarray*}
\frac{\sum_{k=0}^{K-1} \overline{B}\phi(\alpha(t_k))}{\sum_{k=0}^{K-1} T[k]} &\geq& \mu^*  - \frac{(C_0/V + \epsilon)}{\frac{1}{K}\sum_{k=0}^{K-1}T[k]}\\
&\geq& \mu^* - (C_0/V + \epsilon)
\end{eqnarray*}
where the final inequality holds because $T[k] \geq 1$ for all $k$. Thus:
\[ \liminf_{K\rightarrow\infty} \frac{\sum_{k=0}^{K-1} \overline{B}\phi(\alpha(t_k))}{\sum_{k=0}^{K-1} T[k]} \geq \mu^* - (C_0/V + \epsilon) \]
The above holds for all $\epsilon>0$.  Taking a limit as $\epsilon\rightarrow 0$ implies:
\[ \liminf_{K\rightarrow\infty} \frac{\sum_{k=0}^{K-1} \overline{B}\phi(\alpha(t_k))}{\sum_{k=0}^{K-1} T[k]} \geq \mu^* - C_0/V. \]
Notice that $\phi(\alpha(t))$ only changes at the boundary of each frame and remains 0 within the frame. Thus, we can replace the sum over frames $k$ by a sum over slots $t$. The desired result follows.
\end{proof}

The theorem shows that throughput can be pushed within $O(1/V)$ of the optimal value $\mu^*$, where $V$ can be chosen as large as desired to ensure throughput is arbitrarily close to optimal.  The tradeoff is a queue bound that grows linearly with $V$ according to Lemma \ref{lem:2}, which affects the convergence time required for the constraints to be close to the desired time averages (as described in the proof of Lemma \ref{lem:1}).

\section{Multi-user file downloading}

\begin{figure}[htbp]
   \centering
   \includegraphics[height=2.5in]{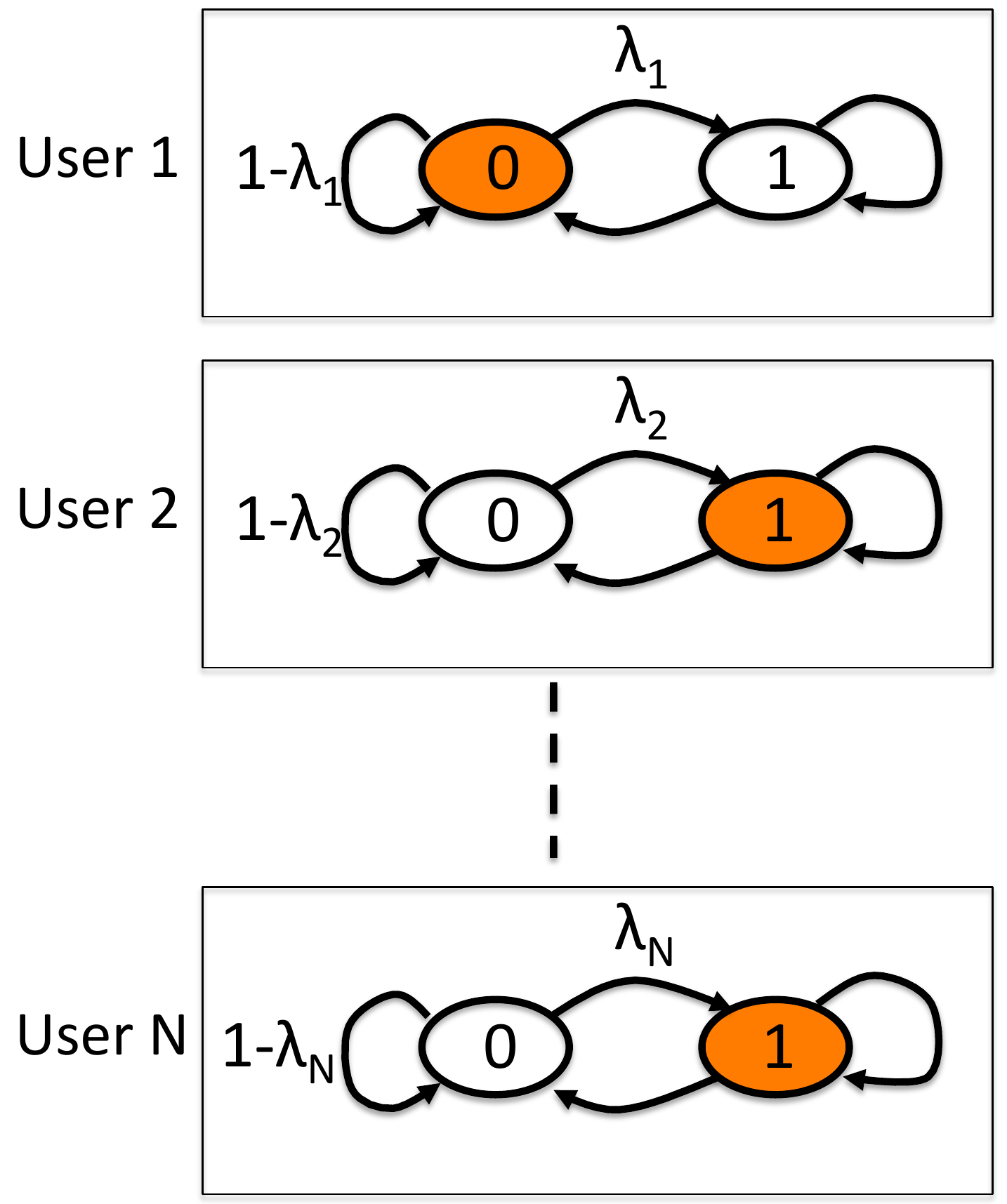} 
   \caption{A system with $N$ users.  The shaded node for each user $n$ indicates the current file state $F_n(t)$  of that user.  There are $2^N$ different state vectors.}
   \label{fig:multi-user-picture}
\end{figure}

This section considers a multi-user file downloading system that
consists of $N$ single-user subsystems.  Each subsystem is similar to the single-user system described in the previous section.
Specifically, for  the $n$-th user (where $n \in \{1, \ldots, N\}$):
\begin{itemize}
\item The file state process is $F_n(t) \in \{0,1\}$.
\item The transmission decision is $\alpha_n(t) \in \script{A}_n$, where $\script{A}_n$ is an abstract set of transmission options for user $n$.
\item The power expenditure on slot $t$ is $p_n(\alpha_n(t))$.
\item The success probability on a slot $t$ for which $F_n(t)=1$ is $\phi_n(\alpha_n(t))$, where $\phi_n(\cdot)$ is the function that describes file completion probability for user $n$.
\item The idle period parameter is $\lambda_n>0$.
\item The average file size is $\overline{B}_n$ bits.
\end{itemize}
Assume that the random variables associated with different subsystems are mutually independent.  The resulting Markov decision problem has $2^N$ states, as shown in Fig. \ref{fig:multi-user-picture}. The transition probabilities for each active user depends on which users are selected for transmission and on the corresponding transmission modes.
 This is a \emph{restless bandit system} because there can also be transitions for non-selected users (specifically, it is possible to transition from inactive to active).

To control the downloading process, there is a
central server with only $M$ threads ($M<N$), meaning that \emph{at most $M$ jobs can be processed simultaneously}. So at each time slot, the server has to make decisions selecting at most $M$ out of $N$ users to transmit a portion of their files. These decisions are further restricted by a global time average power constraint. The goal is to maximize the aggregate throughput, which is defined as
\begin{equation}
    \liminf_{T\rightarrow\infty}\frac{1}{T}\sum_{t=0}^{T-1}\sum_{n=1}^Nc_{n}\overline{B}_n\phi(\alpha_{n}(t)) \nonumber
\end{equation}
where $c_1, c_2, \ldots, c_N$ are a collection of positive weights that can be used to prioritize users.
Thus, this multi-user file downloading problem reduces  to the following:
\begin{align}
   \mbox{Max:} &~\liminf_{T\rightarrow\infty}\frac{1}{T}\sum_{t=0}^{T-1}\sum_{n=1}^N
    c_{n}\overline{B}_n\phi_n(\alpha_{n}(t)) \label{eq:multi-1} \\
    \mbox{S.t.:}&~\limsup_{T\rightarrow\infty}\frac{1}{T}\sum_{t=0}^{T-1}\sum_{n=1}^Np_n(\alpha_{n}(t))\leq \beta \label{eq:multi-2}\\
    &~\sum_{n=1}^NI(\alpha_{n}(t))\leq M~~\forall t\in\{0,1,2,\cdots\} \label{eq:mutli-3} \\
    &~Pr[F_n(t+1)=1~|~F_n(t)=0]=\lambda_n\label{eq:multi-4}\\
    &~Pr[F_n(t+1)=0~|~F_n(t)=1]=\phi_n(\alpha_n(t))\label{eq:multi-5}
\end{align}
where the constraints \eqref{eq:multi-4}-\eqref{eq:multi-5} hold for all $n \in \{1, \ldots, N\}$ and $t \in \{0, 1, 2,\ldots\}$, and
where $I(\cdot)$ is the indicator function defined as:
\begin{equation}
    I(x)=\left\{
           \begin{array}{ll}
             0, & \hbox{if $x=0$;} \\
             1, & \hbox{otherwise.}
           \end{array}
         \right.\nonumber
\end{equation}

\subsection{Lyapunov indexing algorithm} \label{multi:indexing}
This section develops our indexing algorithm for the multi-user case using the single-user case as a stepping stone. The major difficulty is  the instantaneous constraint $\sum_{n=1}^NI(\alpha_{n}(t))\leq M$.  Temporarily neglecting this constraint,
we use Lyapunov optimization to deal with the time average power constraint first.

We introduce a virtual queue $Q(t)$, which is again 0 at $t=0$. Instead of updating it on a frame basis, the server updates this queue every slot as follows:
\begin{equation}\label{m2}
    Q(t+1)=\max\left\{Q(t)+\sum_{n=1}^Np_n(\alpha_n(t))-\beta,0\right\}.
\end{equation}
Define $\script{N}(t)$ as the set of users beginning their renewal frames at time $t$, so that $F_n(t)=1$ for all such users.
 In general, $\script{N}(t)$ is a subset of $\script{N}=\{1,2,\cdots,N\}$. Define $|\script{N}(t)|$ as the number of users
 in the set $\script{N}(t)$.

At each time slot $t$, the server observes the queue state $Q(t)$ and chooses $(\alpha_1(t), \ldots, \alpha_N(t))$
in a manner similar to the single-user case. Specifically, for each user $n \in \script{N}(t)$ define:
\begin{equation}
    g_n(\alpha_n(t))\triangleq\frac{Vc_n\overline{B}_n\phi_n(\alpha_{n}(t))- Q(t)p_n(\alpha_{n}(t))}
        {1+\phi_n(\alpha_n(t))/\lambda_n} \label{eq:multi-alg1}
\end{equation}
This
is similar to the expression \eqref{e5} used in the single-user optimization.   Call $g_n(\alpha_n(t))$ a \emph{reward}.
Now define an index for each subsystem $n$ by:
\begin{equation}\label{m5}
    \gamma_n(t)\triangleq\max_{\alpha_n(t)\in\mathcal{A}_n}g_n(\alpha_n(t))
\end{equation}
which is the maximum possible reward one can get from the $n$-th subsystem at time slot $t$. Thus, it is natural to define the following myopic algorithm:  Find the (at most) $M$ subsystems in $\script{N}(t)$ with the greatest rewards, and serve these with their corresponding
optimal $\alpha_n(t)$ options in $\script{A}_n$ that maximize $g_n(\alpha_n(t))$.
\begin{algorithm}
\begin{Alg}~
\begin{itemize}
  \item At each time slot $t$, the server observes virtual queue state $Q(t)$ and computes the indices using (\ref{m5}) for all $n\in\mathcal{N}(t)$.
  \item Activate the $\min[M, |\script{N}(t)|]$ subsystems with greatest indices, using their corresponding actions $\alpha_n(t) \in \script{A}_n$ that
  maximize $g_n(\alpha_n(t))$.
  \item Update $Q(t)$ according to \eqref{m2}  at the end of each slot $t$.
\end{itemize}
\end{Alg}
\end{algorithm}

\subsection{Theoretical performance analysis}

In this subsection, we show that the above algorithm always satisfies the desired time average power constraint.  Define:
\begin{eqnarray*}
p^{min}_n &=& \min_{\alpha_n\in\mathcal{A}_n\setminus\{0\}}p_n(\alpha_n) \\
p^{min} &=& \min_np^{min}_n \\
p^{max}_n &=& \max_{\alpha_n\in\mathcal{A}_n}p_n(\alpha_n) \\
c^{max} &=& \max_{n}c_n \\
\overline{B}^{max} &=& \max_n \overline{B}_n
\end{eqnarray*}
Assume that $p^{min}>0$.

\begin{lemma} \label{lem:3}
Under the above Lyapunov indexing algorithm, the
queue $\{Q(t)\}_{t=0}^{\infty}$  is deterministically bounded. Specifically, we have for all $t \in \{0, 1, 2, \ldots\}$:
\[ Q(t)\leq \max\left\{\frac{Vc^{max}\overline{B}^{max}}{p^{min}}+\sum_{n=1}^Np^{max}_n-\beta,0\right\}\]
\end{lemma}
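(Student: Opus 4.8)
The plan is to prove this by induction on the slot index $t$, mirroring the argument for the single-user bound in Lemma \ref{lem:2}, but tracking the aggregate power $\sum_{n=1}^N p_n(\alpha_n(t))$ in place of the single-frame power and using the per-slot update \eqref{m2} rather than the per-frame update. Write $B^\ast \triangleq \frac{Vc^{max}\overline{B}^{max}}{p^{min}}+\sum_{n=1}^Np^{max}_n-\beta$ for the candidate bound.

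First I would dispose of the easy case $\sum_{n=1}^N p^{max}_n \leq \beta$. Since at most $\sum_{n=1}^N p^{max}_n$ units of power can be spent on any slot, the increment $\sum_n p_n(\alpha_n(t)) - \beta$ in \eqref{m2} is nonpositive, so the queue never increases and $Q(t)\leq Q(0)=0$ for all $t$; the stated bound then holds because its right-hand side equals $0$. The substance is the case $\sum_{n=1}^N p^{max}_n > \beta$, where $B^\ast > 0$. The base case $Q(0)=0\leq B^\ast$ is immediate. For the inductive step I assume $Q(t)\leq B^\ast$ and split on the size of $Q(t)$. If $Q(t)\leq \frac{Vc^{max}\overline{B}^{max}}{p^{min}}$, then since the power added on any slot is at most $\sum_{n=1}^N p^{max}_n$, equation \eqref{m2} gives $Q(t+1)\leq Q(t)+\sum_{n=1}^N p^{max}_n - \beta \leq B^\ast$. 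The remaining sub-case is $\frac{Vc^{max}\overline{B}^{max}}{p^{min}} < Q(t)\leq B^\ast$, where I would show that every user is effectively assigned the idle action, so no power is spent and $Q(t+1)\leq Q(t)\leq B^\ast$.

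The crux, and the main obstacle, is this last sub-case: I must verify that the greedy index rule -- activating the $\min[M,|\script{N}(t)|]$ subsystems with largest indices and applying the $g_n$-maximizing action -- spends zero power whenever $Q(t)$ exceeds the threshold $\frac{Vc^{max}\overline{B}^{max}}{p^{min}}$. The key estimate is that for any user $n$ and any active action $\alpha_n\neq 0$, using $\phi_n(\cdot)\leq 1$, $p_n(\alpha_n)\geq p^{min}_n\geq p^{min}$, $c_n\leq c^{max}$, and $\overline{B}_n\leq\overline{B}^{max}$, the numerator of $g_n(\alpha_n)$ in \eqref{eq:multi-alg1} satisfies $Vc_n\overline{B}_n\phi_n(\alpha_n)-Q(t)p_n(\alpha_n) \leq Vc^{max}\overline{B}^{max}-Q(t)p^{min} < 0$. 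Since the denominator $1+\phi_n(\alpha_n)/\lambda_n$ is positive, every active action yields a strictly negative reward, whereas the idle action $\alpha_n=0$ yields reward exactly $0$. Hence the $g_n$-maximizing action is the idle action for each $n\in\script{N}(t)$, the indices $\gamma_n(t)$ all equal $0$, and the algorithm spends no power on slot $t$ regardless of which (at most $M$) subsystems it nominally activates; users outside $\script{N}(t)$ are idle and contribute no power either.

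The only genuine difference from the single-user argument is that the per-user thresholds $\frac{Vc_n\overline{B}_n}{p^{min}_n}$ must be dominated by a single global quantity so that the induction carries through simultaneously for all $n$. Uniformizing them through the constants $c^{max}$, $\overline{B}^{max}$, and $p^{min}$ is exactly what makes the threshold $\frac{Vc^{max}\overline{B}^{max}}{p^{min}}$ valid for every user at once, which is what closes the induction and yields the claimed deterministic bound.
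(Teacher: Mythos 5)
Your proposal is correct and follows essentially the same route as the paper's own proof: the trivial case $\sum_{n=1}^N p^{max}_n \leq \beta$, then induction on $t$ with the same two sub-cases, and the same key estimate that $Vc_n\overline{B}_n\phi_n(\alpha_n) - Q(t)p_n(\alpha_n) \leq Vc^{max}\overline{B}^{max} - Q(t)p^{min} < 0$ forces the idle action once $Q(t)$ exceeds the threshold. Your write-up is if anything slightly more explicit than the paper's in spelling out why the strictly negative reward of every active action makes $\alpha_n(t)=0$ the maximizer of $g_n$.
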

\begin{proof}
First, consider the case when $\sum_{n=1}^Np^{max}_n \leq \beta$. Since $Q(0)=0$, it is clear from the updating
rule \eqref{m2} that $Q(t)$ will remain 0 for all $t$.

Next, consider the case when $\sum_{n=1}^Np^{max}_n > \beta$. We prove the assertion by induction on $t$. The result
trivially holds for $t=0$. Suppose at $t=t'$, we have:
\[ Q(t')\leq\frac{Vc^{max}\overline{B}^{max}}{p^{min}}+\sum_{n=1}^Np^{max}_n-\beta \]
We are going to prove that the same statement holds for $t=t'+1$. We further divide it into two cases:
\begin{enumerate}
  \item $Q(t')\leq\frac{Vc^{max}\overline{B}^{max}}{p^{min}}$. In this case, since the queue increases by at most
  $\sum_{n=1}^Np^{max}_n - \beta$ on one slot, we have:
  \[ Q(t'+1) \leq \frac{Vc^{max}\overline{B}^{max}}{p^{min}} + \sum_{n=1}^Np^{max}_n - \beta\]

  \item $\frac{Vc^{max}\overline{B}^{max}}{p^{min}}<Q(t')\leq\frac{Vc^{max}\overline{B}^{max}}{p^{min}}+\sum_{n=1}^Np^{max}_n-\beta$. In this case, since $\phi_n(\alpha_n(t'))\leq 1$, there is no possibility that $Vc_n\overline{B}_n\phi_n(\alpha_n(t'))\geq Q(t')p_n(\alpha_n(t'))$ unless $\alpha_n(t')=0$. Thus, the Lyapunov indexing algorithm of minimizing
  \eqref{eq:multi-alg1} chooses  $\alpha_n(t')=0$ for all $n$. Thus, all indices are 0. This implies that $Q(t'+1)$ cannot increase, and we get $Q(t'+1)\leq\frac{Vc^{max}\overline{B}^{max}}{p^{min}}+\sum_{n=1}^Np^{max}_n-\beta$.
\end{enumerate}
\end{proof}

\begin{theorem}
The proposed Lyapunov indexing algorithm achieves the constraint:
\[ \limsup_{T\rightarrow\infty}\frac{1}{T}\sum_{t=0}^{T-1}\sum_{n=1}^Np_n(\alpha_{n}(t))\leq \beta \]
\end{theorem}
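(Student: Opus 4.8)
The plan is to mirror the argument of Lemma \ref{lem:1}, leveraging the deterministic queue bound just established in Lemma \ref{lem:3} together with a telescoping sum on the slot-based virtual queue \eqref{m2}. The key observation is that, because the multi-user queue is updated every slot (rather than once per frame, as in the single-user case), the reduction from the queue bound to the power constraint is even more direct than before: no conversion from frame-indexed sums to slot-indexed sums (the $K(T)$ step at the end of Lemma \ref{lem:1}) is needed here.

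First I would discard the $\max\{\cdot,0\}$ operation in \eqref{m2} to obtain, for every $t$, the lower bound $Q(t+1) \geq Q(t) + \sum_{n=1}^N p_n(\alpha_n(t)) - \beta$. Rearranging isolates the instantaneous total power:
\[ \sum_{n=1}^N p_n(\alpha_n(t)) \leq \beta + Q(t+1) - Q(t). \]
Next I would sum this inequality over $t \in \{0, 1, \ldots, T-1\}$. The right-hand side telescopes, collapsing the queue terms to $Q(T) - Q(0)$, so that
\[ \sum_{t=0}^{T-1}\sum_{n=1}^N p_n(\alpha_n(t)) \leq T\beta + Q(T) - Q(0). \]
Invoking $Q(0)=0$ and the deterministic bound $Q(T) \leq C$ from Lemma \ref{lem:3} (where $C$ denotes the finite constant appearing there), then dividing by $T$, gives
\[ \frac{1}{T}\sum_{t=0}^{T-1}\sum_{n=1}^N p_n(\alpha_n(t)) \leq \beta + \frac{C}{T}. \]
Taking a $\limsup$ as $T \to \infty$ makes the $C/T$ term vanish and yields the claimed constraint.

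The step most worth emphasizing is the appeal to Lemma \ref{lem:3}: the entire result hinges on the queue being deterministically bounded, which is precisely what that lemma supplies through its induction on the indexing rule. Once the bound is in hand there is no genuine obstacle remaining — the telescoping and limiting steps are routine — so the proof is short. I would also note in passing that, because Lemma \ref{lem:3} is a \emph{deterministic} bound holding on every realization, the resulting power guarantee is a strong sample-path statement rather than merely an in-expectation one, exactly paralleling the single-user development.
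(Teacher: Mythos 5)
Your proposal is correct and follows essentially the same route as the paper: the paper's proof simply cites Lemma \ref{lem:1} specialized to one-slot frames together with the deterministic bound of Lemma \ref{lem:3}, and your telescoping argument is exactly that specialization written out explicitly (correctly noting that the $K(T)$ frame-to-slot conversion step becomes unnecessary).
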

\begin{proof}
Using Lemma \ref{lem:1} under the special case that each frame only occupies one slot, we get that if $\{Q(t)\}_{t=0}^\infty$ is deterministically bounded, then the time average constraint is satisfied. Then, according to Lemma \ref{lem:3} we are done.
\end{proof}


\section{Multi-user optimality in a special case}\label{section:coupling}
 In general, it is very difficult to prove optimality of the above multi-user algorithm.
There are mainly two reasons.
The first reason is that multiple users
might renew themselves asynchronously, making it
difficult to define a ``renewal frame'' for the whole system.
Thus, the proof technique in Theorem 1 is infeasible.
The second reason is that, even without the time average constraint, the problem
degenerates into a standard restless bandit problem where the optimality of
 indexing is not guaranteed.

This section considers
a special case of the multi-user file downloading problem where
the Lyapunov indexing algorithm is provably optimal.
The special case has no time average power constraint.  Further,
for each user $n \in \{1, \ldots, N\}$:
\begin{itemize}
  \item Each file consists of a random number of fixed length packets with mean $\overline{B}_n=1/\mu_n$.
  \item The decision set $\mathcal{A}_n= \{0,1\}$, where 0 stands for ``idle'' and 1 stands for ``download.'' If $\alpha_n(t)=1$, then user $n$ successfully downloads a single packet.
  \item $\phi_n(\alpha_n(t)) = \mu_n\alpha_n(t)$.
  \item Idle time is geometrically distributed with mean $1/\lambda_n$.
  \item The special case \emph{$\mu_n = 1-\lambda_n$ is assumed}.
\end{itemize}
{\bf The assumption that the file length and idle time parameters $\mu_n$ and $\lambda_n$
satisfy $\mu_n = 1-\lambda_n$ is restrictive. However, when this assumption holds,
there exists a certain queueing system which admits \emph{exactly the same Markov dynamics} as the system considered here
(described in Section \ref{subsection:single_buffer} below). More importantly, it allows us to implement the stochastic coupling idea to prove  optimality.}

The goal is to maximize the sum throughput (in units of packets/slot), which is defined as:
\begin{equation}\label{thput}
    \liminf_{T\rightarrow\infty}\frac{1}{T}\sum_{t=0}^{T-1}\sum_{n=1}^N \overline{B}_n\phi(\alpha_{n}(t)).
\end{equation}
In this special case, the multi-user file downloading problem reduces to the following:
\begin{align}
   \mbox{Max:} &~\liminf_{T\rightarrow\infty}\frac{1}{T}\sum_{t=0}^{T-1}\sum_{n=1}^N\alpha_{n}(t) \label{sp:multi-1} \\
    \mbox{S.t.:}&~\sum_{n=1}^N\alpha_{n}(t)\leq M~~\forall t\in\{0,1,2,\cdots\} \label{sp:mutli-2} \\
        &~\alpha_n(t) \in \{0, F_n(t)\} \label{sp:multi-2b}\\
    &~Pr[F_n(t+1)=1~|~F_n(t)=0]=\lambda_n\label{sp:multi-3}\\
    &~Pr[F_n(t+1)=0~|~F_n(t)=1]=\alpha_{n}(t)(1-\lambda_n)\label{sp:multi-4}
\end{align}
where the equality \eqref{sp:multi-4} uses the fact that $\mu_n = 1-\lambda_n$.
A picture that illustrates the Markov structure
of constraints \eqref{sp:multi-2b}-\eqref{sp:multi-4} is given in Fig. \ref{fig:WTF1}

\subsection{A system with $N$ single-buffer queues}\label{subsection:single_buffer}

The above model, with the assumption $\mu_n = 1-\lambda_n$, is structurally equivalent to the following:
Consider a system of $N$ single-buffer queues, $M$ servers, and independent Bernoulli packet arrivals with rates $\lambda_n$
to each queue $n \in \{1, \ldots, N\}$.  This considers \emph{packet arrivals} rather than \emph{file arrivals}, so there are no file length variables and no parameters $\mu_n$ in this interpretation.  Let $\mathbf{A}(t) = (A_1(t), \ldots, A_N(t))$ be the binary-valued vector of packet arrivals on slot $t$, assumed to be i.i.d. over slots and independent in each coordinate.
 Assume all packets have the same size and each queue has a single buffer that can store just one packet.
 Let $F_n(t)$ be 1 if queue $n$ has a packet at the beginning of slot $t$, and $0$ else.  Each server can transmit at most 1 packet per slot.  Let $\alpha_n(t)$ be 1 if queue $n$ is served on slot $t$, and $0$ else.
 An arrival $A_n(t)$ occurs at the end of slot $t$ and is accepted only if queue $n$ is empty at the end of the slot (such as when it was served on that slot).  Packets that are not accepted are dropped.
 The Markov dynamics are described by the same figure as before, namely, Fig. \ref{fig:WTF1}.  Further,
the problem of maximizing throughput is given by the \emph{same} equations \eqref{sp:multi-1}-\eqref{sp:multi-4}. Thus, although the variables of the two problems have different interpretations, the problems are structurally equivalent.
For simplicity of exposition, the remainder of this section uses
this single-buffer queue interpretation.

\begin{figure}[top]
   \centering
   \includegraphics[height=2.5in]{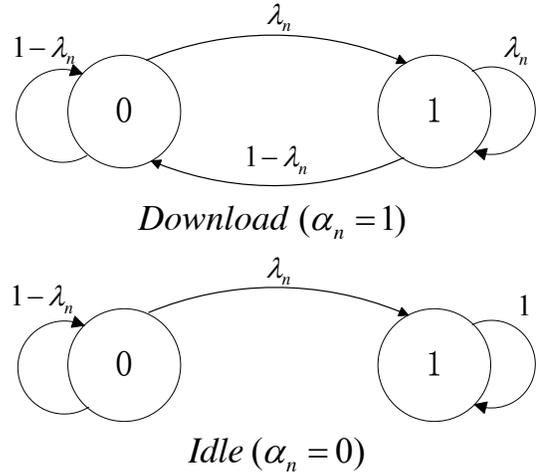} 
   \caption{Markovian dynamics of the $n$-th system.}
   \label{fig:WTF1}
\end{figure}

\subsection{Optimality of the indexing algorithm}
Since there is no power constraint, for any $V>0$ the
Lyapunov indexing policy \eqref{m5} in Section \ref{multi:indexing} reduces
to the following (using $c_n=1$, $Q(t)\equiv0$):   If there are fewer than $M$ non-empty queues, serve all of them. Else, serve the $M$ non-empty queues with the largest values of $\gamma_n$, where:
\begin{equation}
    \gamma_n=\frac{1}{1+(1-\lambda_n)/\lambda_n}=\lambda_n. \nonumber
\end{equation}
Thus, the Lyapunov indexing algorithm in this context reduces to serving the (at most $M$) non-empty queues with the  largest $\lambda_n$ values each time slot.  For the remainder of this section, this is called the Max-$\lambda$ policy.
The following theorem shows that Max-$\lambda$ is optimal in this context.

\begin{theorem} \label{thm:max-lambda}
The Max-$\lambda$ policy is optimal for the problem  \eqref{sp:multi-1}-\eqref{sp:multi-4}.  In particular, under the single-buffer queue interpretation, it maximizes throughput over all policies that transmit on each slot $t$
without knowledge of the arrival vector
$\mathbf{A}(t)$.
\end{theorem}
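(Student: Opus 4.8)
The plan is to prove optimality by a stochastic coupling/interchange argument in the spirit of \cite{tass-server-allocation}, working in the single-buffer queue interpretation and writing $S(t) = \{n : F_n(t) = 1\}$ for the set of full queues. First I would reduce the competing class. Since there is no power cost, a simple exchange shows that idling a server while some queue is full never helps: serving that queue yields one extra departure now and only creates additional future opportunities to accept arrivals. Hence it suffices to compare Max-$\lambda$ against \emph{work-conserving} policies that serve exactly $\min(M,|S(t)|)$ full queues every slot. For any such policy the number of packets delivered on slot $t$ is precisely $\min(M,|S(t)|)$, so the objective \eqref{sp:multi-1} is equivalent to maximizing $\expect{\sum_{t<T}\min(M,|S(t)|)}$, and the only decision that matters is \emph{which} full queues to serve when $|S(t)| > M$. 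Sorting so that $\lambda_1 \ge \cdots \ge \lambda_N$, Max-$\lambda$ serves the $\min(M,|S(t)|)$ full queues of largest $\lambda_n$.

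Next I would set up the interchange. Let $g$ denote Max-$\lambda$ and let $\pi$ be an arbitrary work-conserving policy. Consider the first slot at which they disagree: there $\pi$ serves some full queue $j$ while leaving unserved a full queue $i$ with $\lambda_i \ge \lambda_j$ that $g$ would serve. I would define a coupled companion policy that agrees with $g$ at this slot (serving $i$ instead of $j$) and thereafter mimics $\pi$. The key construction is a coupling of the two systems' \emph{future} arrival processes: identical arrivals on all queues $n \ne i,j$, and a \emph{monotone} coupling of the two Bernoulli streams on the pair $\{i,j\}$ via a common uniform $U$, realizing $\mathbb{1}[U \le \lambda_i]$ against $\mathbb{1}[U \le \lambda_j]$. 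This is legitimate because each system individually only needs arrivals with the correct marginal law, and it is exactly the device that breaks the \emph{asymmetry} of the rates $\lambda_n$ (a naive same-arrival coupling fails on some sample paths). Immediately after the swapped slot the two systems' states agree except on $\{i,j\}$, where they are swapped: one holds the extra packet in $i$, the other in $j$.

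The crux is then to show that, under this coupling, $g$'s cumulative throughput $\sum_{t<T}\min(M,|S(t)|)$ never falls below $\pi$'s up to the random coalescence time at which the two states coincide (after which they evolve identically). The monotone coupling together with $\lambda_i \ge \lambda_j$ is what forces the one-packet discrepancy to resolve in $g$'s favor: on the event $U \le \lambda_i$ the $g$-side refills the emptied high-$\lambda$ queue and ends up with a full set that contains $\pi$'s, giving $g$ a strict head count whenever $M$ permits, while on the complementary tail event the states differ only by a single swap of $\{i,j\}$. Summing these interchanges (or inducting on the number of disagreements inside a fixed horizon) would yield $\expect{\sum_{t<T}\min(M,|S^g(t)|)} \ge \expect{\sum_{t<T}\min(M,|S^\pi(t)|)}$ for every horizon $T$, and, since the per-slot reward is bounded by $M$ and the state space is finite, dividing by $T$ and letting $T \to \infty$ transfers this to the $\liminf$ time-average criterion of \eqref{sp:multi-1}.

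I expect the main obstacle to be controlling the single swapped-pair discrepancy through all combinations of served queues and arrival outcomes until coalescence while certifying the cumulative-throughput inequality. The difficulty is genuine: because the rates are asymmetric, there is a low-probability tail event on which the held packet sits in the higher-$\lambda$ queue on $\pi$'s side, so $\pi$'s \emph{state} is momentarily more favorable even though the running counts are tied; one must show that the compensating gains $g$ accrues on the complementary event dominate, which is precisely where the across-system monotone coupling and the hypothesis $\lambda_i \ge \lambda_j$ are essential. A secondary point to nail down carefully is the passage from the family of finite-horizon inequalities to the single $\liminf$ average-throughput statement, using boundedness of the per-slot reward and finiteness of the chain.
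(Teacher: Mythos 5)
Your high-level strategy overlaps with the paper's in three places: the reduction to work-conserving policies, the identity that per-slot reward equals $\min(M,U(t))$ so that only the total-occupancy process matters, and the use of a monotone coupling of Bernoulli arrivals with unequal rates to defeat the asymmetry of the $\lambda_n$. But the core mechanism is genuinely different, and as written the central step is a gap rather than a proof. The paper does \emph{not} perform a pairwise interchange at the first disagreement. It compares Max-$\lambda$ to an arbitrary work-conserving $\pi$ in one shot by building a copy of the Max-$\lambda$ system driven by a re-coupled arrival process of the correct law, and it propagates, by induction on the slot index, the \emph{prefix-sum majorization} invariant $\sum_{n=1}^{j}F_n^{\pi}(t)\le\sum_{n=1}^{j}F_n^{\lambda}(t)$ for every prefix $j$ (queues sorted by $\lambda_n$). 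The coupling is defined each slot by matching the $l$-th empty post-service buffer of the $\pi$ system to the $l$-th empty post-service buffer of the Max-$\lambda$ system and using exactly your monotone Bernoulli device on each matched pair; the ordering $\lambda_{j^{\pi}(l)}\le\lambda_{j^{\lambda}(l)}$ needed for that device to be a valid distribution is itself a consequence of the invariant. This sidesteps everything you list as ``the crux'': there is no swapped-pair discrepancy to track, no coalescence time, and no induction over infinitely many disagreements, because the invariant is closed under one slot of arbitrary work-conserving service plus the re-coupled arrivals. Stochastic dominance of $U^{\pi}$ by $U^{\text{Max-}\lambda}$ then gives $\expect{\min[U^{\pi}(t),M]}\le\expect{\min[U^{\text{Max-}\lambda}(t),M]}$ directly.

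The gap in your plan is precisely the part you defer. First, ``thereafter mimics $\pi$'' is not well defined once the two states differ on $\{i,j\}$: if $\pi$ serves queue $i$, which is empty in the companion system, the companion must serve the \emph{corresponding} queue $j$ instead, and you must verify that this correspondence (and the resulting discrepancy, which evolves from ``swapped pair'' into ``companion holds a superset'' after the event $\lambda_j<U\le\lambda_i$, and can interact with the $M$-server budget) is preserved under every admissible service choice until coalescence. Second, your reading of the tail event is backwards: when $\pi$ holds the surviving packet in the \emph{higher}-$\lambda$ queue, $\pi$ is blocking arrivals at the higher rate, so its state is \emph{less} favorable, not more---this is the entire intuition for Max-$\lambda$, and getting the direction wrong suggests the invariant you would need has not yet been identified. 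Third, one interchange only compares $\pi$ to a policy with one fewer disagreement from Max-$\lambda$, so you need a finite-horizon induction on the number of disagreement slots followed by a limit; this is routine but must be set up, whereas the paper's single time-induction avoids it. If you want to complete your route, the object to propagate is a coupling under which the companion's set of full queues majorizes $\pi$'s in the sense of the paper's prefix sums---at which point you have essentially rediscovered the paper's Lemma~\ref{sto_file_stat}.
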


For the $N$ single-buffer queue interpretation,
the total throughput is equal to the raw arrival rate $\sum_{i=1}^N \lambda_i$ minus the packet drop rate.
Intuitively, the reason Max-$\lambda$ is optimal is that it chooses to leave packets in the queues that are least likely to induce packet drops. An example comparison of the throughput gap between Max-$\lambda$ and Min-$\lambda$ policies is given in Appendix A.

The proof of Theorem \ref{thm:max-lambda} is divided into two parts. The first part uses stochastic coupling techniques to prove that Max-$\lambda$ dominates all alternative \emph{work-conserving} policies.  A policy is \emph{work-conserving} if it does not allow any server to be idle when it could be used to serve a non-empty queue.
 The second part of the proof
 shows that throughput cannot be increased by considering non-work-conserving policies.

\subsection{Preliminaries on stochastic coupling}\label{prelim}

Consider two discrete time processes $\mathcal{X}\triangleq\{X(t)\}_{t=0}^{\infty}$ and $\mathcal{Y}\triangleq\{Y(t)\}_{t=0}^{\infty}$.  The notation $\mathcal{X} =_{st} \mathcal{Y}$ means that $\mathcal{X}$ and $\mathcal{Y}$ are \emph{stochastically equivalent}, in that they are described by the same probability law.  Formally, this means that
their joint distributions are the same, so for all $t \in \{0, 1, 2, \ldots\}$ and all $(z_0, \ldots, z_t) \in \mathcal{R}^{t+1}$:
\begin{eqnarray*}
&&Pr[X(0)\leq z_0, \ldots, X(t)\leq z_t] \\
&&= Pr[Y(0)\leq z_0, \ldots, Y(t) \leq z_t]
\end{eqnarray*}
The notation $\mathcal{X}\leq_{st}\mathcal{Y}$ means that $\mathcal{X}$ is \emph{stochastically less than or equal to} $\mathcal{Y}$, as defined by the following theorem.
\begin{theorem}  \label{sto_theorem}
(\cite{tass-server-allocation}) The following three statements are equivalent:
\begin{enumerate}
  \item $\mathcal{X}\leq_{st}\mathcal{Y}$.
  \item $Pr[g(X(0), X(1), \cdots, X(t))>z]\leq Pr[g(Y(0),$ $Y(1), \cdots, Y(t))>z]$ for all $t\in \mathbb{Z}^{+}$, all $z$, and for all functions $g:\mathcal{R}^{n}\rightarrow\mathcal{R}$ that are measurable and nondecreasing in all coordinates.
  \item There exist two stochastic processes $\mathcal{X}'$ and $\mathcal{Y}'$ on a common  probability space that satisfy $\mathcal{X}=_{st}\mathcal{X}'$, $\mathcal{Y}=_{st}\mathcal{Y}'$, and $X'(t)\leq Y'(t)$ for every $t\in \mathbb{Z}^{+}$.
\end{enumerate}
\end{theorem}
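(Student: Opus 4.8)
The plan is to establish the three-way equivalence through the cycle $(3)\Rightarrow(2)\Rightarrow(1)\Rightarrow(3)$, where the first two implications are routine and the last carries essentially all of the content (it is a Strassen-type coupling construction). Throughout I read statement $(1)$ in its standard concrete form for processes: the finite-dimensional \emph{upper-orthant} (equivalently, up-set) probability inequalities, $Pr[(X(0),\dots,X(t))\in U]\le Pr[(Y(0),\dots,Y(t))\in U]$ for every $t$ and every measurable set $U\subseteq\mathcal{R}^{t+1}$ that is closed upward under the coordinatewise order.

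First I would dispatch $(3)\Rightarrow(2)$. On the common probability space supplied by $(3)$ one has $X'(s)\le Y'(s)$ for all $s\le t$, so for any $g$ that is measurable and nondecreasing in each coordinate the inequality $g(X'(0),\dots,X'(t))\le g(Y'(0),\dots,Y'(t))$ holds surely, whence $Pr[g(X'(0),\dots,X'(t))>z]\le Pr[g(Y'(0),\dots,Y'(t))>z]$ for every $z$. Since $\mathcal{X}=_{st}\mathcal{X}'$ and $\mathcal{Y}=_{st}\mathcal{Y}'$, the variables $g(X(0),\dots,X(t))$ and $g(X'(0),\dots,X'(t))$ share a law, and likewise on the $Y$ side, so $(2)$ follows. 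Next, $(2)\Rightarrow(1)$ is a specialization: the indicator $\mathbf{1}_U$ of an up-set $U$ is $\{0,1\}$-valued and nondecreasing in every coordinate, so applying $(2)$ with $g=\mathbf{1}_U$ and $0\le z<1$ yields exactly the orthant inequality of $(1)$.

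The crux is $(1)\Rightarrow(3)$, the construction of an order-preserving coupling on path space. Fixing a horizon $t$ and writing $\mathbf{X}_t=(X(0),\dots,X(t))$, $\mathbf{Y}_t=(Y(0),\dots,Y(t))$, I would first obtain a coupling measure $\pi_t$ on $\mathcal{R}^{t+1}\times\mathcal{R}^{t+1}$ with the prescribed marginals and concentrated on the order cone $\{(\mathbf{x},\mathbf{y}):\mathbf{x}\le\mathbf{y}\}$. Strassen's theorem on order couplings guarantees such a $\pi_t$ precisely when $Pr[\mathbf{X}_t\in U]\le Pr[\mathbf{Y}_t\in U]$ for every up-set $U$, which is exactly statement $(1)$, so every finite-dimensional coupling exists. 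The difficulty is that these $\pi_t$ need not be consistent across horizons, so the Kolmogorov extension theorem does not apply directly. I would therefore build the coupling sequentially, coordinate by coordinate, so that consistency holds by construction: having coupled the prefix $(\mathbf{X}'_{t-1},\mathbf{Y}'_{t-1})$ onto the order cone, extend it by coupling the conditional law of $X(t)$ given its past monotonically with that of $Y(t)$ given its past. The one-step monotone coupling exists provided these conditional laws are stochastically ordered whenever the conditioning prefixes satisfy $\mathbf{x}\le\mathbf{y}$, and extracting this conditional ordering from the joint up-set inequalities of $(1)$ is the delicate step: I expect to handle it by a measurable selection of quantile couplings at each stage, or alternatively by a tightness and weak-limit argument on the (marginally constrained, hence relatively weakly compact by Prokhorov) family of candidate couplings, using that the order cone is closed so that the constraint survives weak limits. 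Once a consistent family $\{\pi_t\}$ is assembled, Kolmogorov extension delivers processes $\mathcal{X}',\mathcal{Y}'$ on one space with the correct laws and $X'(t)\le Y'(t)$ for all $t$, giving $(3)$. The principal obstacle is thus not the probabilistic content, since the orthant inequalities come free from $(2)$, but the simultaneous enforcement of the order constraint, the correct marginals, and cross-horizon consistency on the infinite path space.
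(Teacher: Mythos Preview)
The paper does not prove this theorem. It is stated as a known result, attributed via the citation \cite{tass-server-allocation}, and used as a tool in the proof of Corollary~\ref{corollary:1}; no argument for it appears anywhere in the paper. There is therefore no ``paper's own proof'' against which to compare your proposal.

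On the substance of your sketch: the implications $(3)\Rightarrow(2)\Rightarrow(1)$ are handled correctly. For $(1)\Rightarrow(3)$, however, your primary plan---a sequential construction that couples $X(t)$ to $Y(t)$ conditionally on the respective pasts---has a genuine gap. Joint up-set domination of $(X(0),\dots,X(t))$ by $(Y(0),\dots,Y(t))$ does \emph{not} in general imply that the conditional law of $X(t)$ given $\mathbf{X}_{t-1}=\mathbf{x}$ is stochastically below that of $Y(t)$ given $\mathbf{Y}_{t-1}=\mathbf{y}$ whenever $\mathbf{x}\le\mathbf{y}$; that conditional monotonicity is a strictly stronger hypothesis than $(1)$. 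So the ``delicate step'' you flag is not merely delicate but false as stated, and the measurable-quantile route built on it does not go through. Your alternative---apply Strassen on each $\mathcal{R}^{t+1}$, then pass to a weak limit using Prokhorov tightness of the marginally constrained couplings and closedness of the order cone---is the standard and correct path (this is essentially the Kamae--Krengel--O'Brien argument); you should make that the main line rather than a fallback.
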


The following additional notation is used in the proof of Theorem \ref{thm:max-lambda}.

\begin{itemize}
  \item Arrival vector $\{\mathbf{A}(t)\}_{t=0}^{\infty}$, where $\mathbf{A}(t)\triangleq[A_1(t)$ $A_2(t)~\cdots~A_N(t)]$. Each $A_n(t)$ is an independent binary random variable that takes $1$ w.p. $\lambda_n$ and $0$ w.p. $1-\lambda_n$.
  \item Buffer state vector $\{\mathbf{F}(t)\}_{t=0}^{\infty}$, where $\mathbf{F}(t)\triangleq[F_1(t)$ $F_2(t)~\cdots~F_N(t)]$. So $F_n(t)=1$ if queue $n$ has a packet at the beginning of slot $t$, and $F_n(t)=0$ else.
  \item Total packet process $\mathcal{U}\triangleq\{U(t)\}_{t=0}^{\infty}$, where $U(t)\triangleq\sum_{n=1}^NF_n(t)$  represents the total number of packets in the system on slot $t$.  Since each queue can hold at most one packet, we have $0 \leq U(t) \leq N$ for all slots $t$.
\end{itemize}

\subsection{Stochastic ordering of buffer state process}

The next lemma is the key to proving Theorem \ref{thm:max-lambda}.  The lemma considers the multi-queue system with a fixed but arbitrary initial buffer state $\mathbf{F}(0)$. The arrival process $\mathbf{A}(t)$ is as defined above.
Let $\mathcal{U}^{\mbox{\tiny Max-$\lambda$}}$ be the total packet process under the Max-$\lambda$ policy.
Let $\mathcal{U}^{\pi}$ be the corresponding process starting from the same initial state $\mathbf{F}(0)$ and having the same arrivals $\mathbf{A}(t)$, but with an arbitrary work-conserving policy $\pi$.

\begin{lemma} \label{sto_file_stat}
The total packet processes $\mathcal{U}^\pi$ and $\mathcal{U}^{\mbox{\tiny Max-$\lambda$}}$ satisfy:
\begin{equation}\label{st_order}
   \mathcal{U}^{\pi}\leq_{st}\mathcal{U}^{\mbox{\tiny Max-$\lambda$}}
\end{equation}
\end{lemma}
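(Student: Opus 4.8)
The plan is to prove the stronger \emph{pathwise} statement and then invoke characterization (3) of Theorem \ref{sto_theorem}: I would construct both total-packet processes on a common probability space, coupled through a jointly-defined arrival sequence, and show that the Max-$\lambda$ process dominates the $\pi$ process slot by slot. A point to emphasize up front is that the coupling should \emph{not} feed literally identical per-queue arrivals to both systems. A two-queue, $M=1$ example in which one system serves its high-$\lambda$ queue and the other serves its low-$\lambda$ queue shows that identical arrivals can momentarily reverse the ordering (if the high-$\lambda$ queue happens to receive no packet while the low-$\lambda$ queue does). Instead I would let the joint law of the arrivals depend on the current configurations, constrained only so that each queue $n$ sees its correct $\mathrm{Bernoulli}(\lambda_n)$ marginal; this freedom is exactly what the coupling characterization permits.

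The first concrete step is to rewrite the one-slot dynamics in a form that isolates the only place two work-conserving policies can differ. From \eqref{sp:multi-2b}--\eqref{sp:multi-4}, and using $\alpha_n(t)\le F_n(t)$, one gets $F_n(t+1)=\max\{F_n(t)-\alpha_n(t),\,A_n(t)\}$ for every $n$, which one checks reproduces the stated transition probabilities. Summing over $n$ gives $U(t+1)=|\mathcal{B}(t)|+\sum_{n\notin\mathcal{B}(t)}A_n(t)$, where $\mathcal{B}(t)=\{n:F_n(t)=1,\ \alpha_n(t)=0\}$ is the set of occupied-but-unserved queues. For \emph{any} work-conserving policy exactly $\min\{M,U(t)\}$ queues are served, so $|\mathcal{B}(t)|=(U(t)-M)^+$ depends only on $U(t)$; the policies differ only in \emph{which} occupied queues are blocked. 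By definition Max-$\lambda$ blocks the occupied queues of smallest $\lambda_n$, i.e.\ those least likely to reject an incoming packet, which is the source of its advantage.

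With this reduction in hand I would carry a pathwise induction on $t$ with an invariant strictly stronger than $U^{\pi}(t)\le U^{\mbox{\tiny Max-$\lambda$}}(t)$. Relabel the queues so that $\lambda_1\ge\lambda_2\ge\cdots\ge\lambda_N$ and consider the suffix occupancies $\sum_{n=j}^{N}F_n(t)$. The proposed invariant is that, under the coupling, for every $j$ the suffix occupancy under Max-$\lambda$ is at least that under $\pi$; this says Max-$\lambda$ keeps its packets concentrated in the low-$\lambda$ queues, and the case $j=1$ recovers $U^{\pi}(t)\le U^{\mbox{\tiny Max-$\lambda$}}(t)$. The base case is immediate, since both systems start from the common state $\mathbf{F}(0)$. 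For the inductive step I would build the slot-$t$ arrivals by a \emph{monotone matching}: pair each receiving (empty-after-service) queue of $\pi$ with a receiving queue of Max-$\lambda$ of at least as large $\lambda$, couple the two matched Bernoulli arrivals monotonically so the larger-$\lambda$ one dominates, and note that blocked queues contribute a deterministic $+1$. The invariant at time $t$ is precisely what guarantees such a favorable matching exists, and verifying that it regenerates the invariant at $t+1$ is the substance of the step.

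The hard part will be exactly this inductive step, and in particular the bookkeeping when $U^{\mbox{\tiny Max-$\lambda$}}(t)>U^{\pi}(t)$, so that Max-$\lambda$ blocks strictly more queues than $\pi$: one must check that the extra deterministically-retained packets of Max-$\lambda$, together with the monotonically-coupled arrivals, still dominate every suffix sum of $\pi$, and that the matching can always be chosen consistent with the $\lambda$-ordering. This is the analogue of the interchange argument of \cite{tass-server-allocation}, and essentially all the work lives here. Once the invariant is established for all $t$ on the common probability space, it yields $U^{\pi}(t)\le U^{\mbox{\tiny Max-$\lambda$}}(t)$ on every sample path; since the coupled processes have the same laws as the originals, characterization (3) of Theorem \ref{sto_theorem} delivers $\mathcal{U}^{\pi}\le_{st}\mathcal{U}^{\mbox{\tiny Max-$\lambda$}}$, which is \eqref{st_order}.
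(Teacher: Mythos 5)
Your proposal follows essentially the same route as the paper's proof: a pathwise coupling of a law-equivalent copy of the Max-$\lambda$ system on a common probability space, a strengthened induction invariant on partial occupancy sums ordered by $\lambda$ (your suffix sums under decreasing $\lambda$ are exactly the paper's prefix sums under increasing $\lambda$), a monotone matching of the post-service empty buffers with the larger-$\lambda$ arrival dominating, and a final appeal to characterization (3) of Theorem \ref{sto_theorem}. The one place you defer the work --- verifying that the matching of empty buffers respects the $\lambda$-ordering and that the invariant regenerates --- is precisely what the paper handles via the temporary-buffer-state inequality of Lemma \ref{lem:new} and the case analysis on $l^\pi$ versus $l^\lambda$, so the plan is sound and matches the published argument.
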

\begin{proof}
Without loss of generality, assume the queues are sorted so that $\lambda_n\leq\lambda_{n+1},~n=1,2,\cdots,N-1$. Define $\{\mathbf{F}^\pi(t)\}_{t=0}^{\infty}$ as the buffer state vector under policy $\pi$.  Define
$\{\mathbf{F}^{\mbox{\tiny Max-$\lambda$}}(t)\}_{t=0}^{\infty}$
as the corresponding buffer states under the Max-$\lambda$ policy.
By assumption the initial states satisfy $\mathbf{F}^\pi(0)=\mathbf{F}^{\mbox{\tiny Max-$\lambda$}}(0)$.  Next,  we construct a \emph{third process} $\mathcal{U}^\lambda$
with a \emph{modified} arrival vector process $\{\mathbf{A}^\lambda(t)\}_{t=0}^{\infty}$ and a corresponding
buffer state vector $\{\mathbf{F}^\lambda(t)\}_{t=0}^{\infty}$ (with the same initial state
$\mathbf{F}^\lambda(0)=\mathbf{F}^\pi(0)$), which satisfies:
\begin{enumerate}
  \item $\mathcal{U}^\lambda$ is also generated from the Max-$\lambda$ policy.
  \item $\mathcal{U}^\lambda =_{st} \mathcal{U}^{\mbox{\tiny Max-$\lambda$}}$. Since the total packet process is completely determined by the initial state, the scheduling policy, and the arrival process, it suffices to show that $\{\mathbf{A}^\lambda(t)\}_{t=0}^{\infty}$ and $\{\mathbf{A}(t)\}_{t=0}^{\infty}$ have the same probability law.
  \item $U^\pi(t)\leq U^\lambda(t)~\forall t\geq0$.
\end{enumerate}

Since the arrival process $\mathbf{A}(t)$ is i.i.d. over slots, in order to guarantee 2) and 3), it is sufficient to construct $\mathbf{A}^\lambda(t)$ coupled with $\mathbf{A}(t)$ for each $t$ so that the following two properties hold for all $t\geq0$:
\begin{itemize}
  \item  The random variables $\mathbf{A}(t)$ and $\mathbf{A}^\lambda(t)$ have the same probability law. Specifically, both  produce arrivals according to Bernoulli processes that are independent over queues and over time, with $Pr[A_n(t)=1]=Pr[A_n^\lambda(t)=1] = \lambda_n$ for all $n \in \{1, \ldots, N\}$.
  \item For all $j\in\{1,2,\cdots,N\}$,
  \begin{equation}\label{partial_order}
            \sum_{n=1}^jF^\pi_{n}(t)\leq\sum_{n=1}^jF^\lambda_{n}(t),
  \end{equation}
\end{itemize}

The construction is based on an induction.

At $t=0$ we have $\mathbf{F}^\pi(0)=\mathbf{F}^\lambda(0)$. Thus, \eqref{partial_order} naturally holds for $t=0$.
Now fix $\tau \geq 0$ and assume \eqref{partial_order} holds for all slots up to time $t=\tau$.  If $\tau\geq 1$, further assume the arrivals $\{\mathbf{A}^\lambda(t)\}_{t=0}^{\tau-1}$ have been constructed to have the same probability law as $\{\mathbf{A}(t)\}_{\tau=0}^{\tau-1}$. Since arrivals on slot $\tau$ occur at the \emph{end} of slot $\tau$, the arrivals $\mathbf{A}^\lambda(\tau)$ must be constructed.
We are going to show  there exists an $\mathbf{A}^\lambda(\tau)$ that is \emph{coupled}  with $\mathbf{A}(\tau)$ so that it has the same probability law and it also ensures \eqref{partial_order} holds for $t=\tau+1$.

Since arrivals occur after the transmitting action, we divide the analysis into two parts. First, we analyze the temporary buffer states after the transmitting action \emph{but before arrivals occur}. Then, we define arrivals $\mathbf{A}^\lambda(\tau)$ at the end of slot $\tau$ to achieve the desired coupling.

Define $\tilde{\mathbf{F}}^\pi(\tau)$ and $\tilde{\mathbf{F}}^{\lambda}(\tau)$ as the \emph{temporary buffer states} right after the transmitting action at slot $\tau$ but before arrivals occur under policy $\pi$ and policy Max-$\lambda$, respectively.  Thus, for each queue $n \in \{1, \ldots, N\}$:
\begin{eqnarray}
\tilde{F}_n^\pi(\tau) &=& F_n^\pi(\tau) - \alpha_n^\pi(\tau) \label{eq:f-pi-dude} \\
\tilde{F}_n^\lambda(\tau) &=& F_n^\lambda(\tau) - \alpha_n^\lambda(\tau) \label{eq:f-lambda-dude}
\end{eqnarray}
where $\alpha_n^\pi(\tau)$ and $\alpha_n^\lambda(\tau)$ are the slot $\tau$ decisions under policy
$\pi$ and Max-$\lambda$, respectively. Since
 \eqref{partial_order} holds for $j=N$ on slot $\tau$, the total number of packets at the start of slot $\tau$ under policy $\pi$ is less than or equal to that under Max-$\lambda$.   Since both policies $\pi$ and Max-$\lambda$ are work-conserving,
 it is impossible for policy $\pi$ to transmit more packets than Max-$\lambda$ during slot $\tau$.
 This implies:
  \begin{equation}\label{FpileqFlambda}
\sum_{n=1}^N\tilde{F}_n^\pi(\tau)\leq\sum_{n=1}^N\tilde{F}^\lambda_{n}(\tau).
\end{equation}
 Indeed, if $\pi$ transmits the same number of packets as Max-$\lambda$ on slot $\tau$, then \eqref{FpileqFlambda} clearly holds.  On the other hand, if $\pi$ transmits \emph{fewer} packets than Max-$\lambda$, it must transmit fewer than $M$ packets (since $M$ is the number of servers).  In this case, the work-conserving nature of $\pi$ implies that all non-empty queues were served, so that $\tilde{F}_n^\pi(\tau)=0$ for all $n$ and \eqref{FpileqFlambda} again holds.
We now claim the following holds:
\begin{lemma}  \label{lem:new}
\begin{equation}\label{temp_order}
\sum_{n=1}^j\tilde{F}_n^\pi(\tau)\leq\sum_{n=1}^j\tilde{F}^\lambda_{n}(\tau)~~\forall j\in\{1,2,\cdots,N\}.
\end{equation}
\end{lemma}
\begin{proof}
See Appendix B.
\end{proof}

Now
let $j^\pi(l)$ and $j^\lambda(l)$ be the subscript of $l$-th \emph{empty} temporary buffer (with order starting from the first queue) corresponding to $\tilde{\mathbf{F}}^\pi(\tau)$ and $\tilde{\mathbf{F}}^{\lambda}(\tau)$, respectively.
It  follows from \eqref{temp_order} that the $\pi$ system on slot $\tau$ has at  least as many empty temporary buffer states as the Max-$\lambda$ policy, and:
\begin{equation}\label{empty_order}
    j^\pi(l)\leq j^\lambda(l)~~\forall l\in\{1,2,\cdots,K(\tau)\}
\end{equation}
where $K(\tau)\leq N$ is the the number of empty temporary buffer states under Max-$\lambda$ at time slot $\tau$.
Since $\lambda_i\leq\lambda_j$ if and only if $i\leq j$, \eqref{empty_order} further implies that
\begin{equation}\label{lambda_order}
    \lambda_{j^\pi(l)}\leq\lambda_{j^\lambda(l)}~~\forall l\in\{1,2,\cdots,K(\tau)\}.
\end{equation}

Now construct the arrival vector $\mathbf{A}^\lambda(\tau)$ for the system with the Max-$\lambda$ policy  in the following way:
\begin{align}
    A_{j^\pi(l)}(\tau)=1&\Rightarrow A^\lambda_{j^\lambda(l)}(\tau)=1 ~~~w.p.~1\label{coupling1}\\
    A_{j^\pi(l)}(\tau)=0&\Rightarrow\left\{
                                  \begin{array}{ll}
                                    A^\lambda_{j^\lambda(l)}(\tau)=0 , & \hbox{w.p.~$\frac{1-\lambda_{j^\lambda(l)}}{1-\lambda_{j^\pi(l)}}$;} \\
                                    A^\lambda_{j^\lambda(l)}(\tau)=1,  & \hbox{w.p.~$\frac{\lambda_{j^\lambda(l)}-\lambda_{j^\pi(l)}}{1-\lambda_{j^\pi(l)}}$.}
                                  \end{array}
                                \right.\label{coupling2}
\end{align}
Notice that \eqref{coupling2} uses valid probability distributions because of \eqref{lambda_order}.
This establishes the slot $\tau$ arrivals for the Max-$\lambda$ policy for all of its $K(\tau)$ queues with empty temporary buffer states.  The slot $\tau$ arrivals for its queues with non-empty temporary buffers
will be dropped and hence do not affect the queue states on slot $\tau+1$.  Thus, we define arrivals $A_j^\lambda(\tau)$ to be independent of all other quantities and to be Bernoulli with $Pr[A_j^\lambda(\tau)=1]=\lambda_j$ for all $j$ in the set:
\[ j\in\{1,2,\cdots,N\}\setminus\{j^\lambda(1),\cdots,j^\lambda(K(\tau))\} \]
Now we verify that $\mathbf{A}(\tau)$ and $\mathbf{A}^\lambda(\tau)$ have the same probability law.
First condition on knowledge of $K(\tau)$ and the particular $j^\pi(l)$ and $j^\lambda(l)$ values for $l \in \{1, \ldots, K(\tau)\}$.
All queues $j$ with non-empty temporary buffer states on slot $\tau$ under Max-$\lambda$ were defined to have arrivals $A_{j}^\lambda(\tau)$ as independent Bernoulli variables with $Pr[A_j^\lambda(\tau)=1]=\lambda_j$.
It remains to  verify those queues within $\{j^\lambda(1),\cdots,j^\lambda(K(\tau))\}$.
According to \eqref{coupling2}, for any queue $j^\lambda(l)$ in set  $\{j^\lambda(1),\cdots,j^\lambda(K(\tau))\}$, it follows
\begin{eqnarray*}
Pr\left[A^\lambda_{j^\lambda(l)}(\tau)=0\right]&=&(1-\lambda_{j^\pi(l)})\frac{1-\lambda_{j^\lambda(l)}}{1-\lambda_{j^\pi(l)}}\nonumber\\
&=&1-\lambda_{j^\lambda(l)}
\end{eqnarray*}
and so $Pr[A_j^\lambda(\tau)=1]=\lambda_j$ for all $j \in \{j^\lambda(l)\}_{l=1}^{K(\tau)}$.
Further, mutual independence of $\{A_{j^\pi(l)}(\tau)\}_{l=1}^{K(\tau)}$ implies mutual independence of
$\{A_{j^\lambda(l)}(\tau)\}_{l=1}^{K(\tau)}$.   Finally, these quantities are conditionally
independent of events before slot $\tau$, given knowledge of $K(\tau)$ and the particular $j^\pi(l)$ and $j^\lambda(l)$ values for $l \in \{1, \ldots, K(\tau)\}$.
Thus, conditioned on this knowledge,
 $\mathbf{A}(\tau)$ and $\mathbf{A}^\lambda(\tau)$ have the same probability law.  This holds for all possible values of the conditional knowledge $K(\tau)$  and $j^\pi(l)$ and $j^\lambda(l)$.  It follows that $\mathbf{A}(\tau)$ and $\mathbf{A}^\lambda(\tau)$ have the same (unconditioned) probability law.

Finally, we show that the coupling relations \eqref{coupling1} and \eqref{coupling2} produce such $\mathbf{F}^\lambda(\tau+1)$ satisfying
\begin{equation}\label{conclusion}
    \sum_{n=1}^jF^\pi_{n}(\tau+1)\leq\sum_{n=1}^jF^\lambda_{n}(\tau+1),~\forall~j\in\{1,2,\cdots,N\}.
\end{equation}
According to \eqref{coupling1} and \eqref{coupling2},
\[A_{j^\pi(l)}(\tau)\leq A^\lambda_{j^\lambda(l)}(\tau),~~\forall l\in\{1,\cdots,K(\tau)\},\]
thus,
\begin{equation}\label{arrival_ineq}
\sum_{i=1}^{l}A_{j^\pi(i)}(\tau)\leq \sum_{i=1}^{l}A^\lambda_{j^\lambda(i)}(\tau),~~\forall l\in\{1,\cdots,K(\tau)\}.
\end{equation}
Pick any $j\in\{1,2,\cdots,N\}$. Let $l^\pi$ be the number of empty temporary buffers within the first $j$ queues
under policy $\pi$, i.e.
\[l^\pi=\max_{j^\pi(l)\leq j}l\]
Similarly define:
\[l^\lambda=\max_{j^\lambda(l)\leq j}l.\]
Then, it follows:
\begin{eqnarray}
\sum_{n=1}^jF^\pi_{n}(\tau+1)&=&\sum_{n=1}^j\tilde{F}^\pi_{n}(\tau)+\sum_{i=1}^{l^\pi}A_{j^\pi(i)}(\tau) \label{F(t+1)}\\
\sum_{n=1}^jF^\lambda_{n}(\tau+1)&=&\sum_{n=1}^j\tilde{F}^\lambda_{n}(\tau)+\sum_{i=1}^{l^\lambda}A^\lambda_{j^\lambda(i)}(\tau)\label{barF(t+1)}
\end{eqnarray}
We know that $l^\pi \geq l^\lambda$.  So there are two cases:
\begin{itemize}
  \item If $l^\pi=l^\lambda$, then from \eqref{F(t+1)}:
  \begin{eqnarray*}
  \sum_{n=1}^jF_n^\pi(\tau+1) &=& \sum_{n=1}^j\tilde{F}_n^\pi(\tau) + \sum_{i=1}^{l^{\lambda}}A_{j^\pi(i)}(\tau)\\
  &\leq& \sum_{n=1}^j\tilde{F}_n^\lambda(\tau) + \sum_{i=1}^{l^{\lambda}}A_{j^{\lambda}(i)}(\tau) \\
  &=& \sum_{n=1}^j F_n^\lambda(\tau+1)
  \end{eqnarray*}
where the inequality follows from \eqref{temp_order} and from \eqref{arrival_ineq} with
$l=l^{\lambda}$.
Thus, \eqref{conclusion} holds.
  \item If $l^\pi>l^\lambda$, then from \eqref{F(t+1)}:
\begin{align}
\sum_{n=1}^jF^\pi_{n}(\tau+1)=&\sum_{n=1}^j\tilde{F}^\pi_{n}(\tau)+\sum_{i=1}^{l^\lambda}A_{j^\pi(i)}(\tau)\nonumber\\
&+\sum_{i=l^\lambda+1}^{l^\pi}A_{j^\pi(i)}(\tau)\nonumber\\
\leq&\sum_{n=1}^j\tilde{F}^\lambda_{n}(\tau)+\sum_{i=1}^{l^\lambda}A_{j^\pi(i)}(\tau)\nonumber\\
\leq&\sum_{n=1}^j\tilde{F}^\lambda_{n}(\tau)+\sum_{i=1}^{l^\lambda}A^\lambda_{j^\lambda(i)}(\tau) \nonumber\\
=&\sum_{n=1}^jF^\lambda_{n}(\tau+1).\nonumber
\end{align}
where the first inequality follows from the fact that
\begin{eqnarray*}
\sum_{i=l^\lambda+1}^{l^\pi}A_{j^\pi(i)}(\tau)&\leq& l^\pi-l^\lambda\\
&=&(j-l^\lambda)-(j-l^\pi)  \nonumber\\
&=&\sum_{n=1}^j\tilde{F}^\lambda_{n}(\tau)-\sum_{n=1}^j\tilde{F}^\pi_{n}(\tau),\nonumber
\end{eqnarray*}
and the second inequality follows from \eqref{arrival_ineq}.
\end{itemize}
Thus, \eqref{partial_order} holds for $t=\tau+1$ and the induction step is done.
\end{proof}

\begin{corollary} \label{corollary:1}
The Max-$\lambda$ policy maximizes throughput within the class of work-conserving policies.
\end{corollary}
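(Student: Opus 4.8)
The plan is to reduce the throughput objective to a monotone functional of the total packet process $\mathcal{U}$, so that the stochastic domination already established in Lemma~\ref{sto_file_stat} transfers immediately. The key observation is that, since each buffer holds at most one packet, $U(t)=\sum_{n=1}^N F_n(t)$ is exactly the number of non-empty queues at the start of slot $t$. Any work-conserving policy activates as many non-empty queues as the $M$ servers allow, and by constraint~\eqref{sp:multi-2b} each activated queue transmits exactly one packet. Hence, regardless of which work-conserving rule is used,
\[ \sum_{n=1}^N \alpha_n(t) = \min\{M, U(t)\}. \]
First I would verify this identity carefully, distinguishing the cases $U(t)\le M$ and $U(t)>M$.

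Given the identity, the per-slot throughput is the function $\min\{M,\cdot\}$ of the scalar $U(t)$, which is nondecreasing, so for each horizon $T$ the reward $g_T(U(0),\ldots,U(T-1))=\sum_{t=0}^{T-1}\min\{M,U(t)\}$ is nondecreasing in every coordinate, exactly the class of functionals covered by Theorem~\ref{sto_theorem}. To pass to the $\liminf$ time average, I would invoke part~3) of Theorem~\ref{sto_theorem} to realize coupled copies $\hat{\mathcal{U}}^\pi$ and $\hat{\mathcal{U}}^{\mbox{\tiny Max-$\lambda$}}$ on a common probability space, having the same laws as the originals and satisfying $\hat{U}^\pi(t)\le \hat{U}^{\mbox{\tiny Max-$\lambda$}}(t)$ for all $t$. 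Monotonicity then gives the pathwise inequality $\min\{M,\hat{U}^\pi(t)\}\le\min\{M,\hat{U}^{\mbox{\tiny Max-$\lambda$}}(t)\}$ for every $t$, so the partial time averages obey the same inequality for every $T$ and the inequality survives $\liminf_{T\to\infty}$. Because the coupled copies share the laws of the originals, the resulting throughput random variables $R^\pi$ and $R^{\mbox{\tiny Max-$\lambda$}}$ satisfy $R^\pi\le_{st} R^{\mbox{\tiny Max-$\lambda$}}$, which establishes optimality of Max-$\lambda$ within the work-conserving class (taking expectations, or noting the limits are a.s.\ constant, if a scalar comparison is preferred).

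I expect the main obstacle to be the reduction identity $\sum_{n=1}^N\alpha_n(t)=\min\{M,U(t)\}$ rather than the probabilistic step: once throughput is seen to depend on the sample path only through this nondecreasing functional of $\mathcal{U}$, the corollary is essentially a restatement of Lemma~\ref{sto_file_stat}. The only delicate point is exchanging $\liminf$ with the stochastic ordering, which the coupling representation in part~3) of Theorem~\ref{sto_theorem} makes routine, since the pathwise domination holds simultaneously at all times $t$.
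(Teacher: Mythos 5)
Your proposal is correct and follows essentially the same route as the paper: both reduce the per-slot service of any work-conserving policy to $\min[M,U(t)]$, a nondecreasing function of the total packet process, and then invoke the stochastic ordering $\mathcal{U}^\pi \leq_{st} \mathcal{U}^{\mbox{\tiny Max-$\lambda$}}$ from Lemma~\ref{sto_file_stat} via Theorem~\ref{sto_theorem}. The only difference is cosmetic: the paper applies statement 2) of Theorem~\ref{sto_theorem} to compare per-slot expectations $\expect{S^\pi(t)} \leq \expect{S^{\mbox{\tiny Max-$\lambda$}}(t)}$, while you use the coupling representation in statement 3) to get a pathwise comparison of the time averages, which handles the $\liminf$ a bit more explicitly but is not a different argument.
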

\begin{proof}
Let $S^\pi(t)$ be the number of packets transmitted under any work-conserving policy $\pi$ on slot $t$,
and let $S^{\mbox{\tiny Max-$\lambda$}}(t)$ be the corresponding process under policy Max-$\lambda$.  Lemma
\ref{sto_file_stat} implies $\mathcal{U}^\pi(t) \leq_{st} \mathcal{U}^{\mbox{\tiny Max-$\lambda$}}$. Then:
\begin{eqnarray*}
 \expect{S^\pi(t)}&=&\expect{\min[U^\pi(t), M]}   \\
  &\leq&\expect{\min[U^{\mbox{\tiny Max-$\lambda$}}(t),M]} \\
  &=& \expect{S^{\mbox{\tiny Max-$\lambda$}}(t)}
\end{eqnarray*}
where the inequality follows from Theorem \ref{sto_theorem}, with the understanding that $g(U(0),\ldots, U(t))\triangleq\min[U(t),M]$ is a function that is nondecreasing in all coordinates.
\end{proof}

\subsection{Extending to non-work-conserving policies}

Corollary \ref{corollary:1} establishes optimality of Max-$\lambda$ over the class of all work-conserving policies.
To complete the proof of Theorem \ref{thm:max-lambda}, it remains to show that throughput cannot be increased by allowing for non-work-conserving policies.  It suffices to show that for any non-work-conserving policy, there exists a work-conserving policy that gets the same or better throughput.  The proof is straightforward and we give only a
proof sketch for brevity.
Consider any non-work-conserving policy $\pi$, and let $F_n^\pi(t)$ be its buffer state process on slot $t$ for each queue $n$.  For the same initial buffer state and arrival process, define the work-conserving policy $\pi'$ as follows:  Every slot $t$, policy $\pi'$ initially allocates the $M$ servers to exactly the same queues as policy $\pi$.  However, if some of these queues are empty under policy $\pi'$, it reallocates those servers to any non-empty queues that are not yet allocated servers (in keeping with the work-conserving property).  Let $F_n^{\pi'}(t)$ be the buffer state process for queue $n$ under policy $\pi'$.
 It is not difficult to show that $F_n^\pi(t) \geq F_n^{\pi'}(t)$ for all queues $n$ and all slots $t$.
 Therefore, on every slot $t$, the amount of \emph{blocked arrivals} under policy $\pi$ is always greater than or equal to that under policy $\pi'$.  This implies the throughput under policy $\pi$ is less than or equal to that of policy $\pi'$.

\section{Simulation experiments} \label{section:sims}

In this section, we demonstrate near optimality of the multi-user Lyapunov indexing algorithm by extensive simulations. In the first part, we simulate the case in which the file length distribution is geometric, and show that the suboptimality gap is extremely small. In the second part, we test the robustness of our algorithm for more general scenarios in which the file length distribution is not geometric.
For simplicity, it is assumed throughout that all transmissions send a fixed sized packet, all files are an integer number of these packets, and that decisions $\alpha_n(t) \in \script{A}_n$ affect the success probability of the transmission as well as the power expenditure.

\subsection{Lyapunov indexing with geometric file length} \label{subsection:geometric-sim}
In the first simulation we use $N=8$, $M=4$ with action set $\mathcal{A}_n=\{0,1\}~\forall n$; The settings are generated randomly and specified in Table I, and the constraint $\beta=5$.

\begin{table}
\begin{center}
\caption{Problem parameters}
\begin{tabular}{|l|l|l|l|l|l|l|}
  \hline
   User & $\lambda_n$ & $\mu_n$ & $\phi_n(1)$ & $c_n$ & $p_n(1)$  \\ \hline
  1 & 0.0028 & 0.5380 & 0.4842 & 4.7527 & 3.9504  \\
  2 & 0.4176 & 0.5453 & 0.4908 & 2.0681 & 3.7391  \\
  3 & 0.0888 & 0.5044 & 0.4540 & 2.8656 & 3.5753  \\
  4 & 0.3181 & 0.6103 & 0.5493 & 2.4605 & 2.1828  \\
  5 & 0.4151 & 0.9839 & 0.8855 & 4.5554 & 3.1982  \\
  6 & 0.2546 & 0.5975 & 0.5377 & 3.9647 & 3.5290  \\
  7 & 0.1705 & 0.5517 & 0.4966 & 1.5159 & 2.5226  \\
  8 & 0.2109 & 0.7597 & 0.6837 & 3.6364 & 2.5376  \\
  \hline
\end{tabular}
\end{center}
\end{table}

The algorithm is run for 1 million slots in each trial and each point is the average of 100 trials. We compare the performance of our algorithm with the optimal randomized policy. The optimal policy is computed by constructing composite states (i.e. if there are three users where user 1 is at state 0, user 2 is at state 1 and user 3 is at state 1, we view 011 as a composite state), and then reformulating this MDP into a linear program (see \cite{MDP}) with $\mathbf{5985}$ variables and $\mathbf{258}$ constraints.

In Fig. \ref{fig:Stupendous2}, we show that as our tradeoff parameter $V$ gets larger, the objective value approaches the optimal value and achieves a near optimal performance. Fig. \ref{fig:Stupendous3} and Fig. \ref{fig:Stupendous4} show that $V$ also affects the virtual queue size and the constraint gap. As $V$ gets larger, the average virtual queue size becomes larger and the gap becomes smaller. We also plot the upper bound of queue size we derived from Lemma \ref{lem:3} in Fig. \ref{fig:Stupendous3}, demonstrating that the queue is bounded. In order to show that $V$ is indeed a trade-off parameter affecting the convergence time, we plotted Fig. \ref{fig:Stupendous5}. It can be seen from the figure that as $V$ gets larger, the number of time slots needed for the running average to roughly converge to the optimal power expenditure becomes larger.

\begin{figure}[htbp]
   \centering
   \includegraphics[height=2.5in]{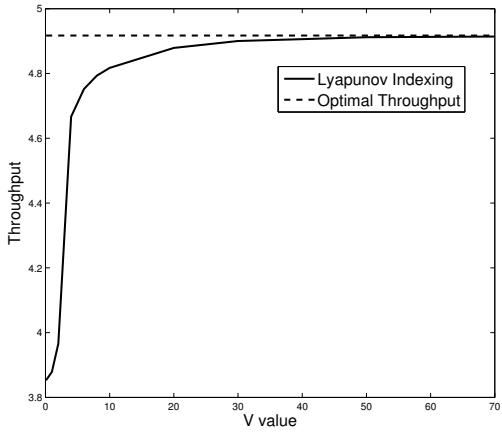} 
   \caption{Throughput versus tradeoff parameter V}
   \label{fig:Stupendous2}
\end{figure}

\begin{figure}[htbp]
   \centering
   \includegraphics[height=2.5in]{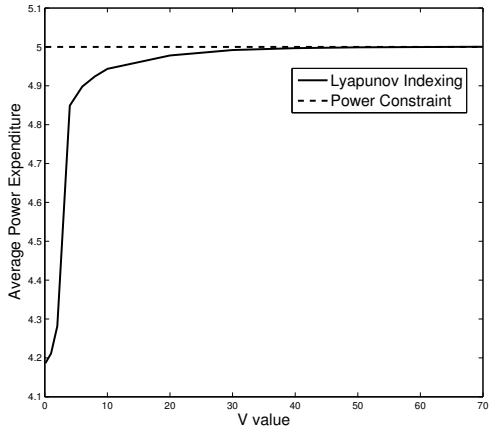} 
   \caption{The time average power consumption versus tradeoff parameter $V$.}
   \label{fig:Stupendous3}
\end{figure}

\begin{figure}[htbp]
   \centering
   \includegraphics[height=2.5in]{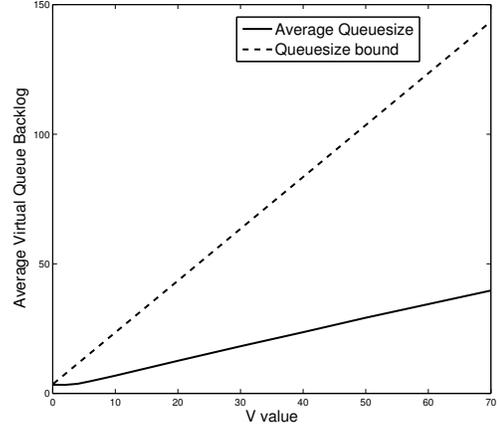} 
   \caption{Average virtual queue backlog versus tradeoff parameter $V$.}
   \label{fig:Stupendous4}
\end{figure}

\begin{figure}[htbp]
   \centering
   \includegraphics[height=2.5in]{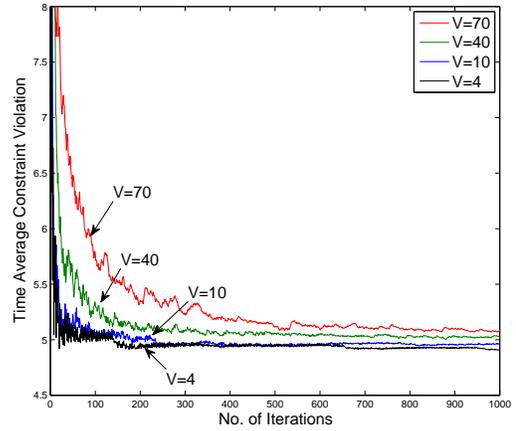} 
   \caption{Running average power consumption versus tradeoff parameter $V$.}
   \label{fig:Stupendous5}
\end{figure}

In the second simulation, we explore the parameter space and demonstrate that in general the suboptimality gap of our algorithm is negligible. First, we define the relative error as the following:
\begin{equation}\label{e20}
    \textrm{relative error}=\frac{|OBJ-OPT|}{OPT}
\end{equation}
where $OBJ$ is the objective value after running 1 million slots of our algorithm and $OPT$ is the optimal value. We first explore the system parameters by letting $\lambda_{n}$'s and $\mu_n$'s take random numbers within 0 and 1, letting $c_n$ take random numbers within 1 and 5, choosing $V=70$ and fixing the remaining
parameters the same as the last experiment. We conduct 1000 Monte-Carlo experiments and calculate the average relative error, which is \textbf{0.00083}.

Next, we explore the control parameters by letting  the $p_{n}(1)$ values take  random numbers within 2 and 4, letting $\phi_n(1)/\mu_n$ take  random numbers between 0 and 1, choosing $V=70$, and fixing the remaining parameters the same as the first simulation. The relative error is \textbf{0.00057}. Both experiments show that the suboptimality gap is extremely small.

\subsection{Lyapunov indexing with non-memoryless file lengths}
In this part, we test the sensitivity of the algorithm to different file length distributions. In particular, the uniform distribution and the Poisson distribution are implemented respectively, while our algorithm still treats them as a geometric distribution with the same mean. We then compare their throughputs with the geometric case.

We use $N=9$, $M=4$ with action set $\mathcal{A}_n=\{0,1\}~\forall n$. The settings are specified in Table II with constraint $\beta=5$. Notice that for geometric and uniform distribution, the file lengths are taken to be integer values. The algorithm is run for 1 million slots in each trial and each point is the average of 100 trials.
\begin{table}
\begin{center}
\caption{Problem parameters under geometric, uniform and poisson distribution}
\begin{tabular}{|l|l|l|l|l|l|l|l|}
  \hline
  User & $\mu_n$ & Unif. & Poiss. & $\lambda_n$ & $\phi_n(1)$ & $c_n$ & $p_n(1)$ \\
   &~ & interval & mean &~&~&~& \\ \hline
   1 & 1/3 & [1,5] & 3 & 0.4955 & 0.1832 & 4.3261 & 2.8763 \\
   2 & 1/2 & [1,3] & 2 & 0.1181 & 0.4187 & 1.6827 & 2.0549 \\
  3 & 1/2 & [1,3] & 2 & 0.1298 & 0.4491 & 1.9483 & 2.1469 \\
  4 & 1/7 & [1,13] & 7 & 0.4660 & 0.0984 & 2.7495 & 3.4472 \\
  5  & 1/4 & [1,7] & 4 & 0.1661 & 0.1742 & 1.5535 & 3.2801 \\
  6 & 1/3 & [1,5] & 3 & 0.2124 & 0.3101 & 4.3151 & 3.5648 \\
  7 & 1/2 & [1,3] & 2 & 0.5295 & 0.4980 & 3.6701 & 2.4680 \\
  8 & 1/5 & [1,9] & 5 & 0.2228 & 0.1971 & 4.0185 & 2.2984 \\
  9 & 1/4 & [1,7] & 4 & 0.0332 & 0.1986 & 3.0411 & 2.5747 \\
  \hline
\end{tabular}
\end{center}
\end{table}

While the decisions are made using these values, the affect of these decisions incorporates the actual (non-memoryless) file sizes.
Fig. \ref{fig:Stupendous6}  shows the throughput-versus-$V$ relation for the two non-memoryless cases and the memoryless case with matched means. The performance of all three is similar.  This illustrates that
the indexing algorithm is robust under different file length distributions.

\begin{figure}[htbp]
   \centering
   \includegraphics[height=2.5in]{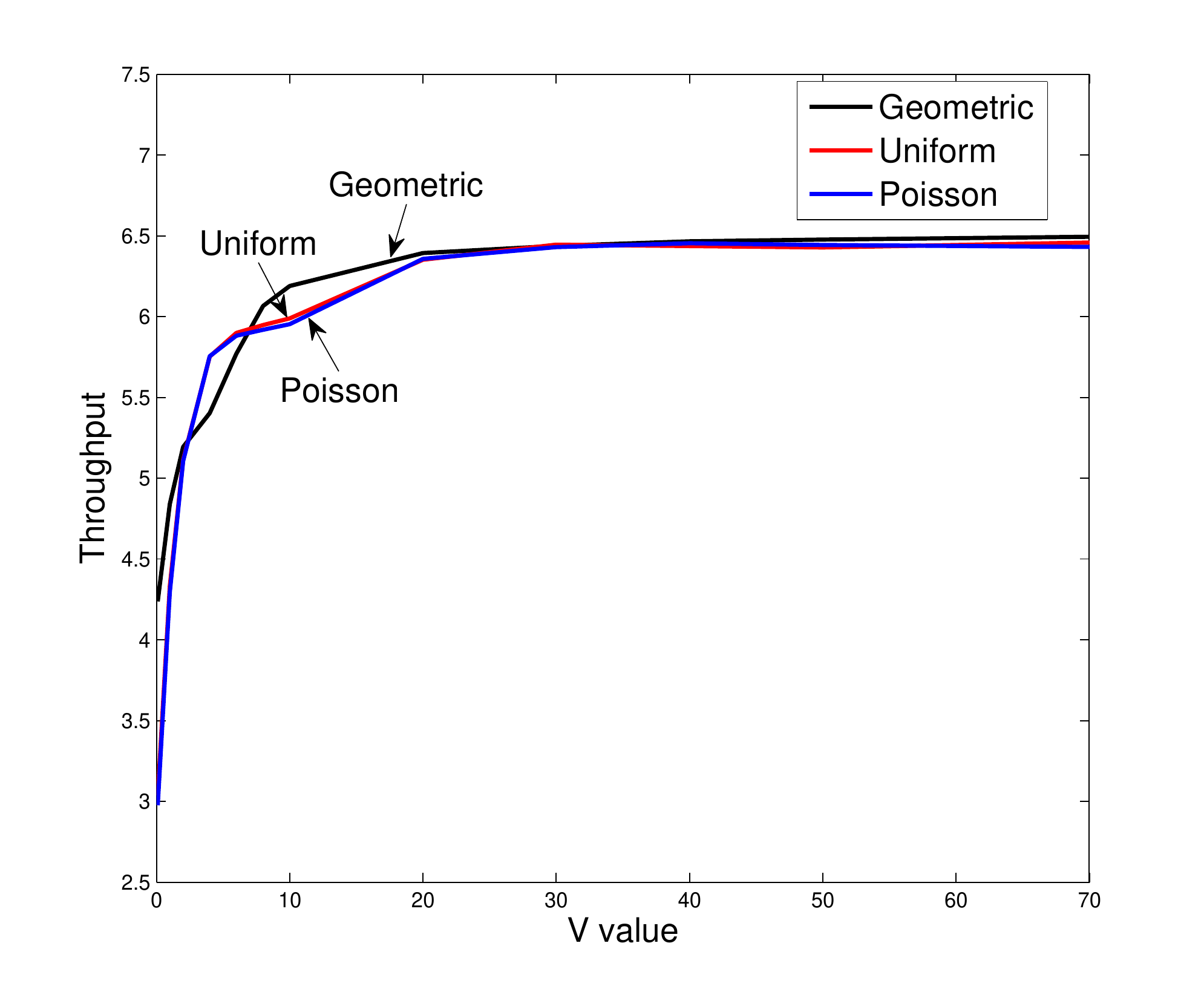} 
   \caption{Throughput versus tradeoff parameter $V$ under different file length distributions.}
   \label{fig:Stupendous6}
\end{figure}

\section{Conclusions}

We have investigated a file downloading system where the network delays affect the file arrival processes. The single-user case was solved by a variable frame length Lyapunov optimization  method.  The technique was extended as a well-reasoned heuristic algorithm for the multi-user case.  Such heuristics are important because the problem is a multi-dimensional Markov decision problem with very high complexity.
The heuristic is simple, can be implemented in an online fashion, and was analytically shown to achieve the desired average power constraint. Moreover, under a special case with no average power constraint,  stochastic coupling was used to prove the heuristic is throughput optimal. Simulations suggest that the algorithm is in general very close to optimal. Further, simulations suggest that non-memoryless file lengths can be accurately approximated by the algorithm.
These methods can likely be applied in more general situations of restless multi-armed bandit problems with constraints.

\section*{Appendix A---Comparison of Max-$\lambda$ and Min-$\lambda$}

This appendix shows that different work conserving policies can give different throughput for
the $N$ single-buffer queue problem of Section \ref{subsection:single_buffer}.
Suppose we have two single-buffer queues and one server.  Let $\lambda_1, \lambda_2$ be the arrival rates of the i.i.d. Bernoulli arrival processes for queues 1 and 2. Assume $\lambda_1 \neq \lambda_2$.
There are 4 system states: $(0,0),~(0,1),~(1,0),~(1,1)$, where state $(i,j)$ means queue 1 has $i$ packets and queue 2 has $j$ packets.  Consider the (work conserving) policy of giving queue 1 strict priority over queue 2.  This is equivalent to the Max-$\lambda$ policy when $\lambda_1>\lambda_2$, and is equivalent to the Min-$\lambda$ policy when $\lambda_1 < \lambda_2$.  Let $\theta(\lambda_1, \lambda_2)$ be the steady state throughput.
Then:
\[ \theta(\lambda_1, \lambda_2) = p_{1,0} + p_{0,1} + p_{1,1}  \]
 where $p_{i,j}$ is the steady state probability of the resulting discrete time Markov chain.  One can solve the global balance equations to show that $\theta(1/2, 1/4) > \theta(1/4, 1/2)$, so that the Max-$\lambda$ policy has a higher throughput than the Min-$\lambda$ policy. In particular, it can be shown that:
 \begin{itemize}
 \item Max-$\lambda$ throughput: $\theta(1/2,1/4) = 0.7$
 \item Min-$\lambda$ throughput: $\theta(1/4, 1/2) \approx 0.6786$
 \end{itemize}


\section*{Appendix B---Proof of Lemma \ref{lem:new}}
This section proves that:
\begin{equation}\label{temp_order-appendix}
\sum_{n=1}^j\tilde{F}_n^\pi(\tau)\leq\sum_{n=1}^j\tilde{F}^\lambda_{n}(\tau)~~\forall j\in\{1,2,\cdots,N\}.
\end{equation}

The case $j=N$ is already established from \eqref{FpileqFlambda}.
Fix $j \in \{1, 2, \ldots, N-1\}$.  Since $\pi$ cannot transmit more packets than Max-$\lambda$ during slot $\tau$,
inequality \eqref{temp_order-appendix} is proved by considering two cases:
\begin{enumerate}
  \item Policy $\pi$ transmits less packets than policy Max-$\lambda$. Then $\pi$ transmits less than $M$ packets during slot $\tau$.  The work-conserving nature of $\pi$  implies all non-empty queues were served, so $\tilde{F}^\pi_n(\tau)=0$ for all $n$ and \eqref{temp_order-appendix} holds.
  \item Policy $\pi$ transmits the same number of packets as policy Max-$\lambda$. In this case, consider the temporary buffer states of the last $N-j$ queues under policy Max-$\lambda$. If  $\sum_{n=j+1}^N\tilde{F}^\lambda_{n}(\tau)=0$, then clearly the following holds
      \begin{equation}\label{reverse_order}
      \sum_{n=j+1}^{N}\tilde{F}^\pi_{n}(\tau)\geq\sum_{n=j+1}^N\tilde{F}^\lambda_{n}(\tau).
      \end{equation}
      Subtracting \eqref{reverse_order} from \eqref{FpileqFlambda} immediately gives \eqref{temp_order-appendix}.
      If  $\sum_{n=j+1}^N\tilde{F}^\lambda_{n}(\tau)>0$, then all $M$ servers of the Max-$\lambda$ system were devoted to serving the largest $\lambda_n$ queues.  So only packets in the last $N-j$ queues could be transmitted by Max-$\lambda$
      during the slot $\tau$.  In particular, $\alpha_n^\lambda(\tau)=0$ for all $n \in \{1, \ldots, j\}$, and so (by \eqref{eq:f-lambda-dude}):
      \begin{equation} \label{eq:last-dude}
      \sum_{n=1}^j\tilde{F}^\lambda_{n}(\tau)=\sum_{n=1}^jF^\lambda_{n}(\tau)
      \end{equation}
      Thus:
      \begin{align}
      \sum_{n=1}^j\tilde{F}^\pi_{n}(\tau)&\leq\sum_{n=1}^jF_n^\pi(\tau) \label{eq:thus-dude1} \\
      &\leq\sum_{n=1}^jF^\lambda_{n}(\tau)\label{eq:thus-dude2} \\
      &=\sum_{n=1}^{j}\tilde{F}^\lambda_{n}(\tau), \label{eq:thus-dude3}
      \end{align}
      where \eqref{eq:thus-dude1} holds by \eqref{eq:f-pi-dude}, \eqref{eq:thus-dude2} holds because
      \eqref{partial_order} is true on slot $t=\tau$, and the last equality holds by \eqref{eq:last-dude}. This proves \eqref{temp_order-appendix}.
\end{enumerate}
\bibliographystyle{unsrt}
\bibliography{bibliography/refs}
%


\end{document}